\documentclass[letterpaper,twocolumn]{article}
\usepackage[margin=.8in]{geometry}
\setlength\columnsep{.4in}

\usepackage{xcolor}
\usepackage[T1]{fontenc}
\usepackage{times}
\usepackage{hyperref}
\usepackage{natbib}
\usepackage{amsmath}
\usepackage{amssymb}
\usepackage{amsthm}
\usepackage{mleftright}
\usepackage{adjustbox}
\usepackage{enumitem}
\usepackage{pifont}
\newcommand{\cmark}{\ding{51}}
\newcommand{\xmark}{\ding{55}}
\usepackage{makecell}
\usepackage{colortbl}
\theoremstyle{definition}
\newtheorem{theorem}{Theorem}
\newtheorem{definition}{Definition}

\newtheorem{corollary}{Corollary}
\usepackage{float}
\usepackage{algorithm}
\usepackage[noend]{algpseudocode}
\usepackage{caption}
\usepackage{authblk}
\usepackage{listings}
\DeclareMathOperator*{\argmax}{argmax}
\DeclareMathOperator*{\argmin}{argmin}
\DeclareMathOperator*{\softmax}{softmax}
\DeclareMathOperator*{\expect}{\mathbb{E}}
\DeclareMathOperator*{\mix}{mix}

\newcommand{\replace}[3]{#1[#2 / #3]}

\title{Simultaneous incremental support adjustment and metagame solving: An equilibrium-finding framework for continuous-action games}

\author[1]{Carlos Martin}
\author[1,2,3,4]{Tuomas Sandholm}
\affil[ ]{\{cgmartin, sandholm\}@cs.cmu.edu}
\affil[1]{Carnegie Mellon University}
\affil[2]{Strategy Robot, Inc.}
\affil[3]{Optimized Markets, Inc.}
\affil[4]{Strategic Machine, Inc.}
\date{}

\begin{document}
\maketitle
\begin{abstract}
We present a framework for computing approximate mixed-strategy Nash equilibria of continuous-action games.
It is a modification of the traditional double oracle algorithm, extended to multiple players and continuous action spaces.
Unlike prior methods, it maintains fixed-cardinality pure strategy sets for each player.
Thus, unlike prior methods, only a constant amount of memory is necessary.
Furthermore, it does not require exact metagame solving on each iteration, which can be computationally expensive for large metagames.
Moreover, it does not require global best-response computation on each iteration, which can be computationally expensive or even intractable for high-dimensional action spaces and general games.
Our method incrementally reduces the exploitability of the strategy profile in the finite metagame, pushing it toward Nash equilibrium.
Simultaneously, it incrementally improves the pure strategies that best respond to this strategy profile in the full game.
We evaluate our method on various continuous-action games, showing that it obtains approximate mixed-strategy Nash equilibria with low exploitability.
\end{abstract}

\section{Introduction}

Research on computing \emph{Nash equilibria (NE)} in games has mostly focused on settings with finite, discrete action spaces.
However, many games involving space, money, or time have continuous action spaces. 
These include security games in continuous spaces \citep{Kamra_2017,Kamra_2018,Kamra_2019}, resource allocation games \citep{Ganzfried_2021}, network games \citep{Ghosh_2019}, simulations of military scenarios and wargaming \citep{Marchesi20:Learning}, and video games \citep{Berner19:Dota,Vinyals19:Grandmaster}.
Also, even if the action space is discrete, it may be fine-grained enough to treat as continuous in order to improve the computational efficiency of equilibrium finding~\citep{Borel38:Traite,Chen06:Mathematics,Ganzfried10:Computing}.
The usual approach to computing an equilibrium of a game with continuous action spaces involves discretizing the action space. 
That entails loss in solution quality~\citep{Kroer15:Discretization}. 
Also, it does not scale well; for one, in multidimensional action spaces it entails a combinatorial explosion of discretized points (\emph{i.e.}, exponential in the number of dimensions).
Therefore, other approaches are called for.

We present \emph{Simultaneous Incremental Support Adjustment and Metagame Solving (SISAMS)}, a framework for computing approximate mixed-strategy NE in continuous-action games.
It is a modification of the double oracle algorithm, extended to multiple players and continuous action spaces.
Unlike prior methods, it runs in constant memory, because it maintains fixed-cardinality pure strategy sets for each player.
Furthermore, it does not require exact metagame solving on each iteration, which can be computationally expensive for large metagames.
Moreover, it does not require global best-response computation on each iteration, which can be computationally expensive or even intractable for high-dimensional action spaces and general games.
The method incrementally reduces the exploitability of the strategy profile in the finite metagame, pushing it toward NE.
Simultaneously, it incrementally improves the pure strategies that best respond to this strategy profile in the full game.
We evaluate the method on various continuous-action games, showing that it obtains approximate mixed-strategy NE with low exploitability.

In \S\ref{sec:formulation}, we introduce relevant notation and present a mathematical formulation of the problem we are tackling.
In \S\ref{sec:prior}, we describe prior work.
In \S\ref{sec:method}, we present our method.
In \S\ref{sec:experiments}, we describe the experimental settings we use as benchmarks, and present our experimental results for them.
In \S\ref{sec:conclusion}, we present our conclusions and suggest directions for future research.
In the appendix, we present additional related work, additional figures, theoretical analysis, and code.

\section{Problem formulation}
\label{sec:formulation}

We use the following notation.
Between vectors or matrices, \(\odot\) denotes the element-wise or Hadamard product.
If \(n \in \mathbb{N}\), \([n] = \{0, \ldots, n - 1\}\).
If \(\mathcal{X}\) is a set, \(\triangle \mathcal{X}\) is the set of all Borel probability measures on \(\mathcal{X}\).
If \(\mu \in \triangle \mathcal{X}\) and \(f : \mathcal{X} \to \mathcal{Y}\), \(f \# \mu \in \triangle \mathcal{Y}\) is the pushforward of \(\mu\) under \(f\).
If \(x \in \mathcal{X}\), \(\delta(x)\) is the Dirac measure centered at \(x\).
If \(f : \mathcal{X} \to \mathcal{Y}, x \in \mathcal{X}, y \in \mathcal{Y}\), \(f[x/y] : \mathcal{X} \to \mathcal{Y}\) is the function defined by \(f[x/y](x') = y\) if \(x' = x\) and \(f(x')\) otherwise.
If \(\mathcal{A}\) is a family indexed by \(\mathcal{I}\), \(\mathcal{A}_\times = \prod_{i \in I} \mathcal{A}_i\).
If \(\mu_i \in \triangle \mathcal{A}_i\) for \(i \in \mathcal{I}\), \(\bigotimes_{i \in \mathcal{I}} \mu_i \in \triangle \mathcal{A}_\times\) is their product measure.

A strategic-form game is a tuple \((\mathcal{I}, \mathcal{S}, u)\) where \(\mathcal{I}\) is a set of players, \(\mathcal{S}_i\) is a set of strategies for \(i \in \mathcal{I}\), and \(u_i : \mathcal{S}_\times \to \mathbb{R}\) is a utility function for \(i \in \mathcal{I}\).
Here, \(\mathcal{S}_\times = \prod_{i \in \mathcal{I}} \mathcal{S}_i\) is the set of strategy profiles.
Given a strategy profile, a \emph{best response (BR)} for a player is a strategy that maximizes its utility given the other players' strategies.
That is, given \(s \in \mathcal{S}_\times\), a BR for \(i \in \mathcal{I}\) is an element of \(\mathcal{B}_i(s) = \argmax_{r_i \in \mathcal{S}_i} u_i(s[i/r_i])\).
An NE is a strategy profile for which each player's strategy is a BR to the other players' strategies.
That is, it is an \(s \in \mathcal{S}_\times\) such that \(s_i \in \mathcal{B}_i(s)\) for all \(i \in \mathcal{I}\).

Given \(s \in \mathcal{S}_\times\), player \(i\)'s regret is \(R_i(s) = \sup_{r_i \in \mathcal{S}_i} u_i(\replace{s}{i}{r_i}) - u_i(s)\).
Let \(\Psi = \sup_{i \in \mathcal{I}} R_i\).
An \(\varepsilon\)-NE is an \(s \in \mathcal{S}_\times\) such that \(\Psi(s) \leq \varepsilon\).
Define the \emph{exploitability} as \(\Phi = \sum_{i \in \mathcal{I}} R_i\).
Then \(\Psi\) and \(\Phi\) are bounded in terms of each other as follows: \(\Phi / |\mathcal{I}| \leq \Psi \leq \Phi \leq \Psi |\mathcal{I}|\).
Furthermore, they are nonnegative and zero precisely at NE.
Therefore, if NE exist, finding them is equivalent to minimizing \(\Psi\) or \(\Phi\).
Consequently, \(\Psi\) and \(\Phi\) are standard measures of ``closeness'' to NE in the literature \citep{Lanctot17:Unified, Lockhart_2019, Walton_2021, Timbers_2022}.

For any game \((\mathcal{I}, \mathcal{S}, u)\), there exists a mixed-strategy game \((\mathcal{I}, \Sigma, \bar{u})\) where \(\Sigma_i = \triangle \mathcal{S}_i\) and \(\bar{u}_i(\sigma) = \expect_{s \sim \bigotimes_{j \in \mathcal{I}} \sigma_j} u_i(s)\).
That is, each player's strategy is a probability measure over its original strategy set (\emph{i.e.}, a mixed strategy), and its utility is the resulting expected utility in the original game.
A mixed-strategy NE is an NE of the mixed-strategy game.\footnote{
More generally, one can consider settings where randomness is a limited resource (\emph{e.g.}, only a limited number of random bits are available to the agent) or where the agent can only mix between a limited number of pure strategies (\emph{i.e.}, its mixed strategy must be \emph{sparse}).
Alternatively, its mixed strategy may be restricted to some class of \emph{representable} distributions, such as an explicit parametric model or implicit density model like a Generative Adversarial Network \citep{Goodfellow_2014,goodfellow2020generative}.
}

A continuous-action game is a game whose strategy sets are subsets of Euclidean space, \emph{e.g.}, \(\mathcal{S}_i \subseteq \mathbb{R}^d\).
The following theorems apply to such games.
\citet{Nash50:Equilibrium} showed that if each \(\mathcal{S}_i\) is nonempty and finite, a mixed-strategy NE exists.
\citet{Glicksberg52:Further} showed that if each \(\mathcal{S}_i\) is nonempty and compact, and each \(u_i\) is continuous, a mixed-strategy NE exists.
\citet{Glicksberg52:Further,Fan_1952,Debreu_1952} showed that if each \(\mathcal{S}_i\) is nonempty, compact, and convex, and each \(u_i\) is continuous and quasiconcave in \(s_i\), a pure strategy NE exists.
\citet{Dasgupta86:Existence} showed that if each \(\mathcal{S}_i\) is nonempty, compact, convex, and \(u_i\) is upper semicontinuous and graph continuous and quasiconcave in \(s_i\), a pure strategy NE exists.
They also showed that if each \(\mathcal{S}_i\) is nonempty, compact, and convex, and \(u_i\) is bounded and continuous except on a subset (defined by technical conditions) and weakly lower semicontinuous in \(s_i\), and \(\sum_{i \in \mathcal{I}} u_i\) is upper semicontinuous, a mixed-strategy NE exists.
\citet{Rosen_1965} proved the uniqueness of a pure NE for continuous-action games under diagonal strict concavity assumptions.
Most equilibrium-finding algorithms in the literature target discrete-action games, raising the question of how to compute equilibria for continuous-action games.

\section{Related prior work}
\label{sec:prior}

\paragraph{Double oracle}
\citet{McMahan_2003} introduced \emph{Double Oracle (DO)}, an algorithm for computing NE of two-player zero-sum normal-form games, and proved its convergence.
It maintains finite strategy sets for each player, and iteratively extends them with best responses to an equilibrium of the induced finite metagame.
It can be used to find approximate mixed-strategy NE of games with large action spaces.

\citet{Adam_2021} introduced a DO algorithm for finding an NE in \emph{continuous} two-player zero-sum games with compact strategy sets.
In the authors' words, ``The equilibrium of each finite subgame is found by solving a linear program.
The best responses were computed by selecting the best point of a uniform discretization for the one- dimensional problems and by using a mixed-integer linear programming reformulation for the Colonel Blotto games.''
Their method converges to an NE and is guaranteed to recover an approximate equilibrium in finitely-many steps.

\citet{Kroupa_2021} extended this to \(n\) players under the name of \emph{Multiple Oracle (MO)}.
They prove that their method recovers an approximate equilibrium in finitely many iterations, and converges in the Wasserstein distance to an equilibrium of the original continuous game.
The best response computation is based on global solvers for special classes of utility functions.
According to the authors, ``the choice of best response oracle and the method for solving finite subgames should be fine-tuned for every particular class of game [...] our method converges fast when the dimensions of strategy spaces are small and the generated subgames are not large''.
This algorithm is shown in Algorithm~\ref{alg:double_oracle}.
It starts with nonempty finite subsets of the players' action spaces.
On each iteration, performs the following two steps.
First, it computes an equilibrium of the finite metagame consisting of the restriction of the full game to the finite subsets.
Second, for each player, it computes a best response to this equilibrium \emph{in the full game}, and adds it to the player's finite subset.

\begin{algorithm}
\caption{Double Oracle (DO)}
\label{alg:double_oracle}
\begin{algorithmic}
\State \textbf{input} \((\mathcal{I}, \mathcal{S}, u)\) is a game
\State \textbf{input} \(\mathcal{X}_i\) is a nonempty finite subset of \(\mathcal{S}_i\)
\State \textbf{input} \(T \in \mathbb{N}\) is the number of iterations
\For{\(t = 1\) \textbf{to} \(T\)}
    \State \(\sigma \in \text{MixedStrategyNashEquilibria}((\mathcal{I}, \mathcal{X}, u))\)
    \For{\(i \in \mathcal{I}\)}
        \State \(s_i \in \text{PureStrategyBestResponses}((\mathcal{I}, \mathcal{S}, u), \sigma, i)\)
        \State \(\mathcal{X}_i \gets \mathcal{X}_i \cup \{s_i\}\)
    \EndFor
\EndFor
\State \textbf{output} \(\sigma\)
\end{algorithmic}
\end{algorithm}

\citet{Li_2021} extended DO to \(n\)-player general-sum continuous Bayesian games.
They represent agents as neural networks and optimize them using \emph{natural evolution strategies (NES)} \citep{Wierstra_2008,Wierstra_2014}.
To approximate a pure-strategy equilibrium, they formulate the problem as a bi-level optimization and employ NES to implement both inner-loop best response optimization and outer-loop regret minimization.

\paragraph{PSRO}
\citet{Lanctot17:Unified} introduced \emph{policy-space response oracles (PSRO)}.
It generalizes DO to the case where the metagame's choices are policies rather than actions.
It also generalizes FSP.
Unlike previous work, any meta-solver can be plugged in to compute a new meta-strategy.

\citet{mcaleer2020pipeline} introduced \emph{pipeline PSRO (P2SRO)}, a scalable general method for finding approximate NE in large zero-sum imperfect-information games.
P2SRO parallelizes PSRO with convergence guarantees by maintaining a hierarchical pipeline of reinforcement learning workers, each training against the policies generated by lower levels in the hierarchy.

\citet{McAleer_2021} introduced \emph{extensive-form DO (XDO)}, an extensive-form DO algorithm for two-player zero-sum games that is guaranteed to converge to an approximate NE linearly in the number of infostates.
Unlike PSRO, which mixes best responses at the root of the game, XDO mixes best responses at every infostate.
They also introduced \emph{neural XDO (NXDO)}, where the best response is learned through deep RL, and an approximate equilibrium of the metagame is computed via a deep RL method for finding NE, such as NFSP \citep{Heinrich16:Deep} or DREAM \citep{steinberger2020dream}.
NXDO iteratively adds reinforcement learning policies to a population but solves an extensive-form restricted game, which has been shown to be more efficient than solving a matrix-form restricted game as in PSRO.

\citet{McAleer_2022} introduced \emph{anytime DO (ADO)}, a tabular DO algorithm for \emph{two-player zero-sum} games that is guaranteed to converge to an NE while decreasing exploitability from one iteration to the next.
They also introduced \emph{anytime PSRO (APSRO)}, a version of ADO that calculates best responses via reinforcement learning.

\citet{McAleer_2022b} introduced \emph{self-play PSRO (SP-PSRO)}, which adds an approximately optimal stochastic policy to the population in each iteration.
Instead of adding only deterministic best responses to the opponent's least exploitable population mixture, SP-PSRO also learns an approximately optimal stochastic policy and adds it to the population as well.
This is in contrast to prior methods, which might need to add all deterministic policies before converging.
However, as the authors note in the conclusion, SP-PSRO is a normal-form algorithm in that it mixes at the root of the game tree.
\citet{McAleer_2021} showed that this may take an exponential number of iterations to converge to an approximate NE, and introduced XDO and NXDO to address this problem.
The authors leave combining SP-PSRO with XDO and NXDO to future work.
Like APSRO, SP-PSRO is limited to two-player zero-sum games.

\citet{Muller2020A} extend the theoretical underpinnings of PSRO by considering an alternative solution concept called \(\alpha\)-Rank \citep{omidshafiei2019alpha} rather than NE.
They establish convergence guarantees and identify links between NE and \(\alpha\)-Rank.

\citet{marris2021multi} proposed \emph{Joint Policy-Space Response Oracles (JPSRO)}, an algorithm for training agents in n-player, general-sum extensive form games.
They consider \emph{correlated equilibrium (CE)} \citep{Aumann74:Subjectivity} and \emph{coarse correlated equilibrium (CCE)} as solution concepts, rather than NE.

Due to space constraints, we describe additional related work in the appendix.

\section{Proposed method}
\label{sec:method}

The traditional DO algorithm, extended to multiple players and continuous actions in the way described by \citet{Kroupa_2021}, has some limitations.
First, metagame solving on each iteration can be expensive, especially when the metagame is large.
Second, global best response computation on each iteration can be expensive and even intractable, \emph{e.g.}, for high-dimensional spaces and/or general utility functions.
The key ideas behind our algorithm are to improve an approximate equilibrium \emph{gradually} across iterations, to improve approximate best responses \emph{gradually} across iterations, and to do both of these \emph{simultaneously}.
Our algorithm is shown as Algorithm~\ref{alg:proposed}.

\begin{algorithm}
\caption{Simultaneous Incremental Support Adjustment and Metagame Solving (SISAMS)}
\label{alg:proposed}
\begin{algorithmic}
\State \textbf{input} \((\mathcal{I}, \mathcal{S}, u)\) is a game
\State \textbf{input} \(n_i \in \mathbb{N}\) is the number of meta-actions for \(i\)
\State \textbf{input} \(w_i \in \triangle n_i\) is a mixed meta-strategy for \(i\)
\State \textbf{input} \(\mathbf{x}_i : n_i \to \mathcal{S}_i\) is an action mapping for \(i\)
\State \textbf{input} \(\alpha, \beta : \mathbb{N} \to \mathbb{R}_{\geq 0}\) are stepsize schedules
\State \textbf{input} \(T \in \mathbb{N}\) is the number of iterations
\State \(u^\mathbf{x}_i : n_\times \to \mathbb{R}\) is the metagame utility function for \(i\)
\For{\(t = 1\) \textbf{to} \(T\)}
    \State \(\dot{w} = -\nabla_w \text{Exploitability}((\mathcal{I}, n, u^\mathbf{x}), w)\)
    \For{\(i \in \mathcal{I}\)}
        \State \(\dot{\mathbf{x}}_i = \nabla_{\mathbf{x}_i} \mix_{j \in n_i} u^\mathbf{x}_i(w[i / \delta(j)])\)
    \EndFor
    \State \(w \gets w + \dot{w} \alpha_t\)
    \State \(\mathbf{x} \gets \mathbf{x} + \dot{\mathbf{x}} \beta_t\)
\EndFor
\State \textbf{output} \(\{\mathbf{x}_i \# w_i\}_{i \in \mathcal{I}}\)
\end{algorithmic}
\end{algorithm}

Unlike the standard DO algorithm, it maintains fixed-cardinality sets of pure strategies\footnote{In this context, we use ``pure strategy'' and ``action'' interchangeably.}, or \emph{supports} for each player.
It adjusts the weight of each pure strategy to decrease the exploitability of the metagame.
(Though exploitability involves a maximum and is thus not everywhere differentiable, it \emph{is} continuous and almost-everywhere differentiable, so we use its subgradient.)
Simultaneously, it adjusts the pure strategies that are best responses to the current approximate equilibrium to increase the corresponding player's utility against that equilibrium.
Unlike standard DO, it does not require exact metagame solving on each iteration, which can be computationally expensive for large metagames.
Furthermore, unlike standard DO, it does not require global best response computation on each iteration, which is computationally expensive or even intractable for high-dimensional action spaces and/or general utility functions.
Standard DO algorithms maintain a set of strategies that is expanded on each iteration with approximate best responses to the meta-strategies of the other players.
These strategies are \emph{static}, that is, they do not change after they are added.
In contrast, our algorithm \emph{dynamically} improves the elements of a best-response ensemble during training.
Thus the ensemble does not need to be grown with each iteration, but improves autonomously over time.

Instead of improving \emph{only} the \emph{best} response on each iteration, we can incentivize the other pure strategies to improve as well.
To do this, we introduce the \emph{rank-based mixing operatator}, or \(\mix\).
It is defined as \(\mix_{i \in \mathcal{I}} x_i = \frac{1}{|\mathcal{I}|} \sum_{i \in \mathcal{I}} r_i x_i\) where \(r_i \in \{1, \ldots, |\mathcal{I}|\}\) is the ordinal rank of element \(i\).
This gives all elements some weight, but more weight to those that perform better.
Thus it gives \emph{all} elements some chance to improve, while simultaneously incentivizing specialization, that is, preventing uniformity.
The intuition behind this is that an individual pure strategy does not need to perform well against all counter-strategies.
Rather, \emph{the ensemble as a whole} does.
Because what matters is the ensemble of pure strategies, an individual pure strategy does not need to perform well against all counter-strategies, but only a subset thereof.
Since the best-performing pure strategies will be chosen automatically as responses, intuitively, a ``good'' ensemble will have pure strategies that are specialized.
The weight of each element depends only on its rank, so the weights are invariant under monotone transformations of the input values.
Thus \(\mix\) is shift-equivariant: \(\mix(\mathbf{x} + \alpha \mathbf{1}) = \mix(\mathbf{x}) + \alpha\).
It is also scale-equivariant: \(\mix(\alpha \mathbf{x}) = \alpha \mix(\mathbf{x})\) for \(\alpha \geq 0\).

When the pure strategy set sizes of the players are all 1, the metagame is trivial, and our algorithm becomes equivalent to simultaneous gradient dynamics, which is a local, differential version of best-response dynamics.
This can be seen from the fact that, in that case, the algorithm reduces to performing \(\mathbf{x} \gets \mathbf{x} + \dot{\mathbf{x}} \beta_t\) where \(\dot{\mathbf{x}}_i = \nabla_{\mathbf{x}_i} u_i^\mathbf{x}(\bullet) = \nabla_{\mathbf{x}_i} u_i(\mathbf{x})\) for each player \(i \in \mathcal{I}\).

Our algorithm uses gradients of the utility functions. 
Some continuous-action games have discontinuous utility functions.
To handle those cases, we replace the corresponding gradient with a \emph{pseudogradient}, which is the gradient of a smoothened version of the function (\emph{e.g.}, the function convolved with a narrow Gaussian).
An unbiased estimator for this pseudogradient is obtained by evaluating the function at randomly-sampled perturbed points and using their values to approximate directional derivatives along those directions \citep{Duchi_2015, Nesterov_2017, Shamir_2017, Salimans_2017, Berahas_2022, metz2021gradients}.
For example, \(\nabla_\mathbf{x} \expect_{\mathbf{z} \sim \mathcal{N}} f(\mathbf{x} + \sigma \mathbf{z}) = \expect_{\mathbf{z} \sim \mathcal{N}} \frac{1}{\sigma} f(\mathbf{x} + \sigma \mathbf{z}) \mathbf{z}\) where \(\mathcal{N}\) is the standard multivariate normal distribution with same dimension as \(\mathbf{x}\).
Table \ref{tab:comparison} summarizes some of the main differences between our method and prior ones from the literature.

\newcommand{\T}{\cellcolor{green!25}\cmark}
\newcommand{\F}{\cellcolor{red!25}\xmark}
\begin{table*}
    \centering
    \begin{tabular}{lcccccccc}
        \hline
         & \(\alpha\)PSRO & JPSRO & NXDO & APSRO & SP-PSRO & MO & RFP & \textbf{SISAMS} \\
        \hline
        \(n\) players & \T & \T & \F & \F & \F & \T & \T & \T \\
        No BR oracle & \F & \T & \T & \T & \T & \F & \F & \T \\
        No equilibrium oracle & \F & \F & \T & \T & \T & \F & \T & \T \\
        NE solution concept & \F & \F & \T & \T & \T & \T & \T & \T \\
        Constant memory & \F & \F & \F & \F & \F & \F & \F & \T \\
        \hline
    \end{tabular}
    \caption{Qualitative comparison of our method to prior methods from the literature.}
    \label{tab:comparison}
\end{table*}

\section{Experiments}
\label{sec:experiments}

To facilitate gradient-based optimization, in games that have bounded action spaces, we use \emph{reparameterization}.
That is, we parameterize the actual bounded actions in terms of an unbounded parameter space using smooth (\(C^\infty\)) ``squeezing'' functions.
Examples include
\(\tanh : \mathbb{R} \to (-1, 1)\),
\(\operatorname{sigmoid} : \mathbb{R} \to (0, 1)\),
\(\operatorname{softplus} : \mathbb{R} \to (0, \infty)\),
\((x \mapsto x^2) : \mathbb{R} \to [0, \infty)\),
and \(\operatorname{softmax} : \mathbb{R}^d \to \triangle [d]\).
Reparameterization is also helpful for pseudogradients, since otherwise perturbations can land outside of the valid domain of the utility function.

Some of the games we test on are based on games with discontinuous utility functions.
Although we can tackle these using pseudogradients, as described in Section \ref{sec:method}, this is an implementation detail and not the focus of this paper.
Therefore, we instead modify the discontinuous utility functions to smoothen them.
For example, we replace ``hard'' argmaxes with softmaxes.
The softmax function is defined as
\(\operatorname{softmax}(\mathbf{x})_i = \frac{\exp \mathbf{x}_i}{\sum_{j \in [n]} \exp \mathbf{x}_j}\).
It satisfies the property that
\(\operatorname{softmax}(\beta \mathbf{x}) \to \operatorname{onehot}(\argmax(\mathbf{x}))\) as \(\beta \to \infty\).
Here, \(\beta \in \mathbb{R}\) can be interpreted as an ``inverse temperature'' or ``sharpness'' parameter.
There is also a probabilistic interpretation:
\(\operatorname{softmax}(\mathbf{x}) = \operatorname{E}_{\mathbf{z} \sim \text{Gumbel}} \operatorname{onehot}(\argmax(\mathbf{x} + \mathbf{z}))\).
Thus using softmax is equivalent to using argmax but perturbing the inputs with Gumbel noise scaled by \(\beta^{-1}\), followed by an expectation.

\paragraph{Interval game}
Among the two-player games that have infinite sets of pure strategies available to the contestants, ``games on the square'' have received much attention \citep{Kuhn_Tucker_1953, gross1957rational, Parthasarathy_1970, parthasarathy1975equilibria, Chin_1976, szep1985games, stein2008separable}.
These games generalize matrix games by replacing the elements \(M_{ij}\) of a utility matrix by a real-valued function \(M(x, y)\) defined on \(x, y \in [a, b]\) for some \(a, b \in \mathbb{R}, a < b\).
The pure strategies available to players 1 and 2 are indexed by the real numbers \(x\) and \(y\), while their mixed strategies become probability distributions over the interval.

An example of such a game is the following.
Let \(\mathcal{I} = \{1, 2\}\), \(\mathcal{S}_i = [-1, 1]\), and \(u_1(x, y) = -u_2(x, y) = (x - y)^2\).
This game was studied by \citet[p. 2856]{parrilo2006polynomial}.
Since Player 2 wants to minimize the squared distance, it should try to ``guess'' the number chosen by Player 1.
Conversely, the first player should try to make its number as difficult to guess as possible.
It has an NE where Player 1 picks \(\pm 1\) with equal probability and Player 2 picks \(0\).
Player 1's expected utility is \(1\).
Our algorithm converges when Player 1 has at least two pure strategies to randomize over, which is expected since that is the number of pure strategies required by their equilibrium strategy.

\paragraph{Circle game}
This is the same as the interval game but on the unit circle, which is both compact and homogeneous.
More precisely, let \(\mathcal{I} = \{1, 2\}\), \(\mathcal{S}_i = \mathbb{S} \subset \mathbb{R}^2\) be the unit circle, and \(u_1(x, y) = -u_2(x, y) = \|x - y\|_2^2\).
In equilibrium, each player picks uniformly on the unit circle.

\paragraph{Glicksberg--Gross game}
This is a game on the square that was introduced and analyzed by \citet{Glicksberg_1953}.
It has utility function \(u : [0, 1]^2 \to \mathbb{R}^2\) where \(u(x, y)_1 = -u(x, y)_2 = \frac{(1 + x) (1 + y) (1 - x y)}{(1 + x y)^2}\).
It has a unique mixed-strategy NE where the strategy of each player has cumulative distribution function \(F(t) = \frac{4}{\pi} \arctan \sqrt{t}\) for \(t \in [0, 1]\), and Player 1's utility is \(\frac{4}{\pi}\).
The authors use this example to illustrate that, unlike games with polynomial utility functions, games with utility functions that are quotients of polynomials do not always have equilibria that are concentrated on a discrete set of points.

\paragraph{Continuous Colonel Blotto game}
A Colonel Blotto game is a game in which players distribute limited resources over multiple battlefields (\emph{i.e.}, items).
A battlefield is won by whoever devotes the most resources to it.
A player's utility is the number of battlefields they win.
It models real-world situations of conflict or competition that involve \emph{resource allocation}, such as political campaigns, research and development, national security, and systems defense.
The original Colonel Blotto game was introduced and studied by \citet{Borel_1953}.
Subsequently, many variants have been introduced and studied in the literature.

\citet{Gross_1950} analyzed a continuous variant in which both players have continuous, possibly unequal budgets.
They obtained exact solutions for various special cases, including all 2-battlefield cases and all 3-battlefield cases with equal budgets.
\citet{Washburn_2013} generalized to the case where battlefield values are unequal across battlefields.
\citet{Kovenock_2021} generalized to the case where battlefield values are also unequal across players.
\citet{Adamo_2009} studied a variant in which players have incomplete information about the other player's resource budgets.
\citet{Kovenock_2011} studied a model where the players are subject to incomplete information about the battlefield valuations.
\citet{Adsera_2021} analyzed the natural multiplayer generalization of the continuous Colonel Blotto game.

We consider a continuous Colonel Blotto game with fixed homogeneous budgets and valuations.
\citet{Gross_1950} analyzed two-player zero-sum case with three battlefields, and derived an analytical equilibrium strategy, which they describe geometrically as follows:
``[The player] inscribes a circle within [the triangle] and erects a hemisphere upon this circle.
He next chooses a point from a density uniformly distributed over the surface of the hemisphere and projects this point straight down into the plane of the triangle...
He then divides his forces in respective proportion to the triangular areas subtended by [this point] and the sides.''
Formally, our game's utility function is
\(u : (\triangle [k])^n \to \mathbb{R}^n\)
where
\(n\) is the number of players,
\(k\) is the number of battlefields,
and \(u(\mathbf{A}) = \softmax(\beta \mathbf{A}) \cdot \mathbf{1}_k\).

\paragraph{Security game}
In game theory, a coordination game is one that rewards players choosing the same action.
In contrast, an \emph{anti}-coordination game \emph{penalizes} choosing the same action.
A \emph{dis}coordination game is a hybrid of the two.
Specifically, one player is incentivized to choose the same action as the other player, while the other player is incentivized to choose a different action.
Such games, in general, lack pure-strategy NE.
The canonical example of such a game is the \(2 \times 2\) \emph{matching pennies} game, whose utility matrix is (a positive affine transformation of) the identity matrix.
To extend this game to a continuous action space, we replace the notion of choosing the \emph{same} action with that of choosing a \emph{nearby} action, under some metric and threshold distance.
More precisely, let \(\mathcal{I} = \{1, 2\}\), \(\mathcal{S}_i = [0, 1]\), and \(u_1(x, y) = -u_2(x, y) = (1 + (x - y)^2 / d^2)^{-1}\).
Here, \(d\) is a distance scale (we use \(0.1\)).
This game has the form of a security game on a continuous space.
Such games have been studied by \citet{Kamra_2017,Kamra_2018,Kamra_2019}.

\paragraph{All-pay auction}
An auction is a mechanism by which one or more items are sold to one or more bidders.
Auctions play a central role in the study of markets and are used in a wide range of contexts.
In a single-item sealed bid auction, bidders simultaneously submit bids and the highest bidder wins the item.
In an \emph{all-pay} auction, each player always pays its bid.
Such auctions are widely used to model lobbying for rents in regulated and trade protected industries, technological competition and R\&D races, political campaigns, job promotions, and other contests \citep{Baye_1996}.
In an \emph{complete-information} all-pay auction, the valuations of the players for all items are common knowledge.
Such an auction lacks pure-strategy equilibria \citep{Baye_1996}.
In the two-player case, in equilibrium, the players bid uniformly on \([0, 1]\).
Formally, our game's utility function is
\(u : [0, 1]^n \to \mathbb{R}^n\)
where
\(n\) is the number of players
and \(u(\mathbf{a}) = \softmax(\beta \mathbf{a}) - \mathbf{a}\).

\paragraph{Chopstick auction}
Multi-item auctions are of great importance in practice, for example in strategic sourcing \citep{Sandholm13:Very} and radio spectrum allocation \citep{Milgrom14:Deferred,Milgrom_2020}.
However, deriving equilibrium bidding strategies for multi-item auctions is notoriously elusive.
A rare notable instance where equilibrium strategies have been derived is the \emph{chopstick auction} \citep{Szentes_2003,Szentes03:Chopsticks}.
In this auction, 3 chopsticks are sold simultaneously in separate first-price sealed-bid auctions.
There are 2 bidders, and it is common knowledge that a pair of chopsticks is worth \$1, a single chopstick is worth nothing by itself, and 3 chopsticks are worth the same as 2.
Here, pure strategies are triples of non-negative real numbers (bids).

This game was analyzed by \citet{Szentes_2003, Szentes03:Chopsticks} and has an interesting equilibrium.
Specifically, let \(T\) be the regular tetrahedron that is the convex hull of the four points \((\tfrac{1}{2}, \tfrac{1}{2}, 0)\), \((\tfrac{1}{2}, 0, \tfrac{1}{2})\), \((0, \tfrac{1}{2}, \tfrac{1}{2})\), and \((0, 0, 0)\).
Then the uniform distribution on the 2-dimensional surface of \(T\) generates a symmetric equilibrium.
Furthermore, all points inside the tetrahedron are pure best responses to this equilibrium mixture.
We benchmark on the chopstick auction since it is a rare case of a multi-item auction with a known analytic equilibrium, so we can compare our output to an exact equilibrium.
It is also a canonical case of simultaneous separate auctions under combinatorial preferences.
Formally, our game's utility function is
\(u : [0, 1]^2 \to \mathbb{R}^2\)
where
\(u(\mathbf{A}) = \operatorname{sigmoid}(\beta (\mathbf{W} \cdot \mathbf{1}_3 - 1.5)) - (\mathbf{A} \odot \mathbf{W}) \cdot \mathbf{1}_3\)
and
\(\mathbf{W} = \softmax(\beta \mathbf{A})\).

\paragraph{Experimental results}
In our experiments, we use \(\beta = 20\), \(10^5\) iterations per epoch, and \(16\) trials per experiment.
Figures~\ref{fig:exploitabilities_1} and \ref{fig:exploitabilities_2} show exploitabilities for the various games we test on.
In these plots, solid lines show the mean across trials, and bands show its standard error.
In each legend, ``lr'' means the learning rate and ``size'' means the size of each player's support.
On the X axis, a ``batch'' is a single iteration of the algorithm.
Each experiment ran on one NVIDIA A100 SXM4 40GB GPU on a computer cluster.
The appendix contains additional figures illustrating equilibria for these games and what the strategies learned by our algorithm look like.

The exploitability plots show that our algorithm converges to low exploitability when the sizes of the players' supports are sufficiently large, with larger supports allowing for a better approximation to an NE for the underlying game.
As expected, a support size of 1, \emph{i.e.}, having a single pure strategy in each player's strategy set, causes a failure to converge to low exploitability.
Thus having multiple pure strategies to mix over is crucial for performance.

\begin{figure*}
    \centering
    \includegraphics[width=.5\linewidth]{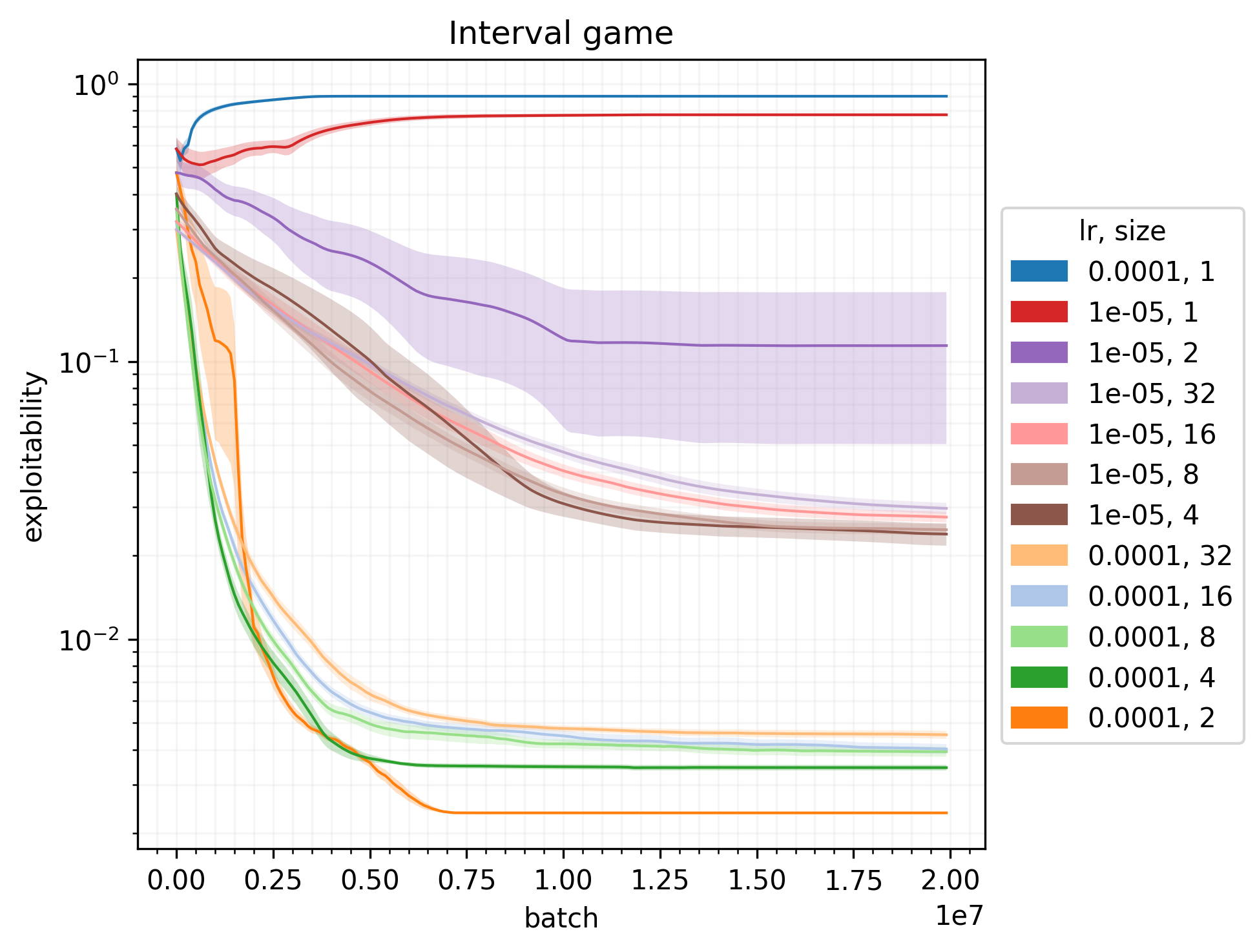}%
    \includegraphics[width=.5\linewidth]{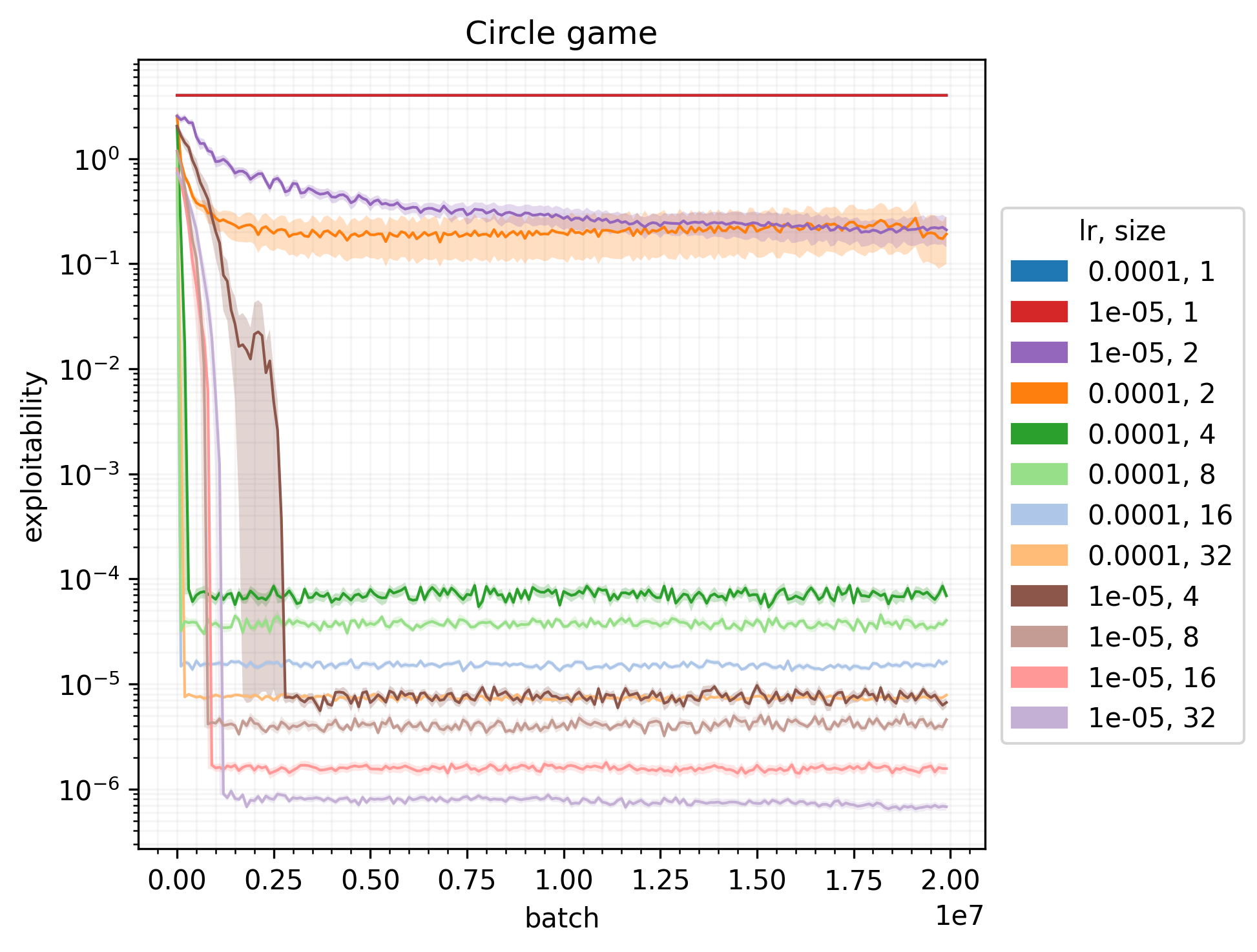}
    \includegraphics[width=.5\linewidth]{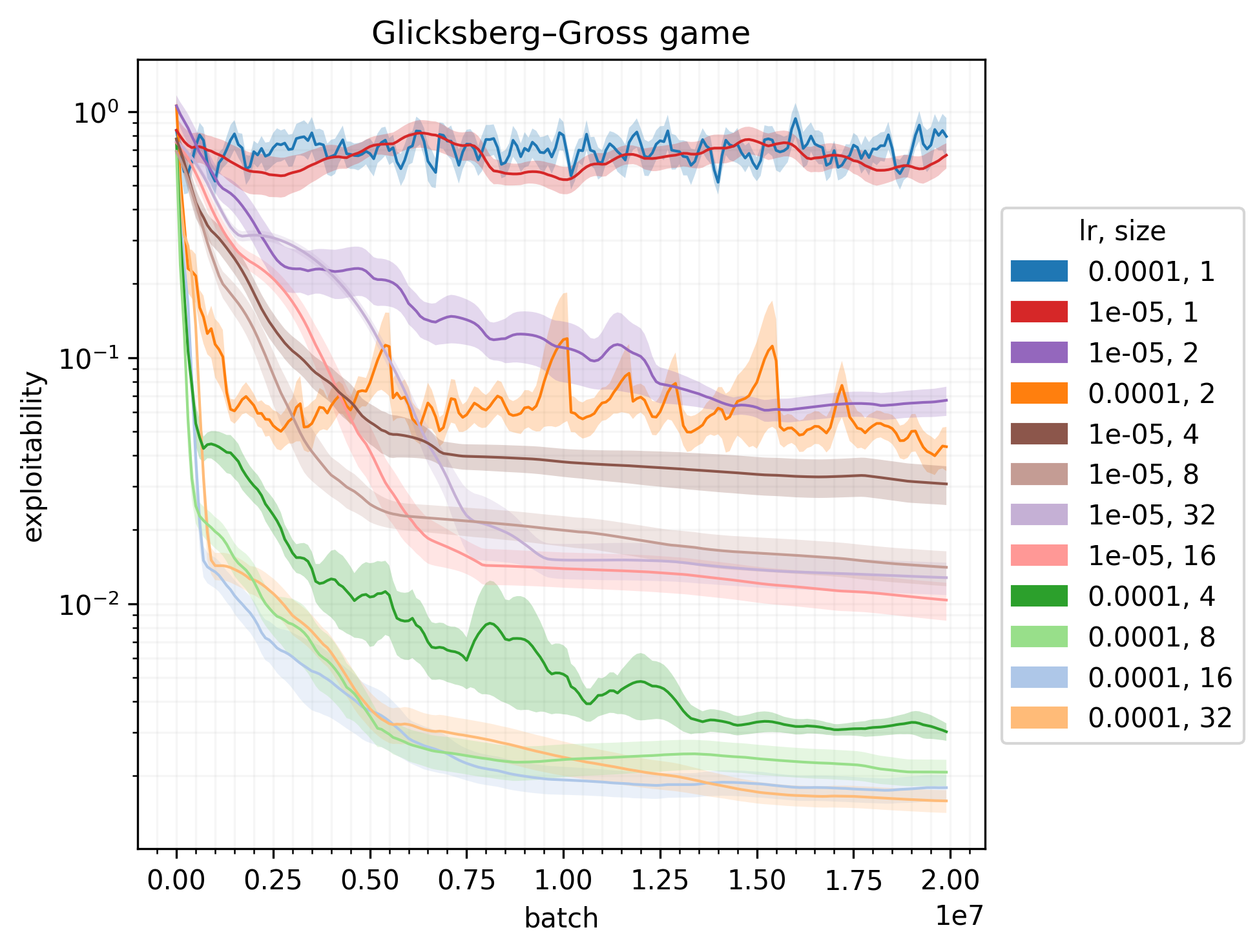}%
    \includegraphics[width=.5\linewidth]{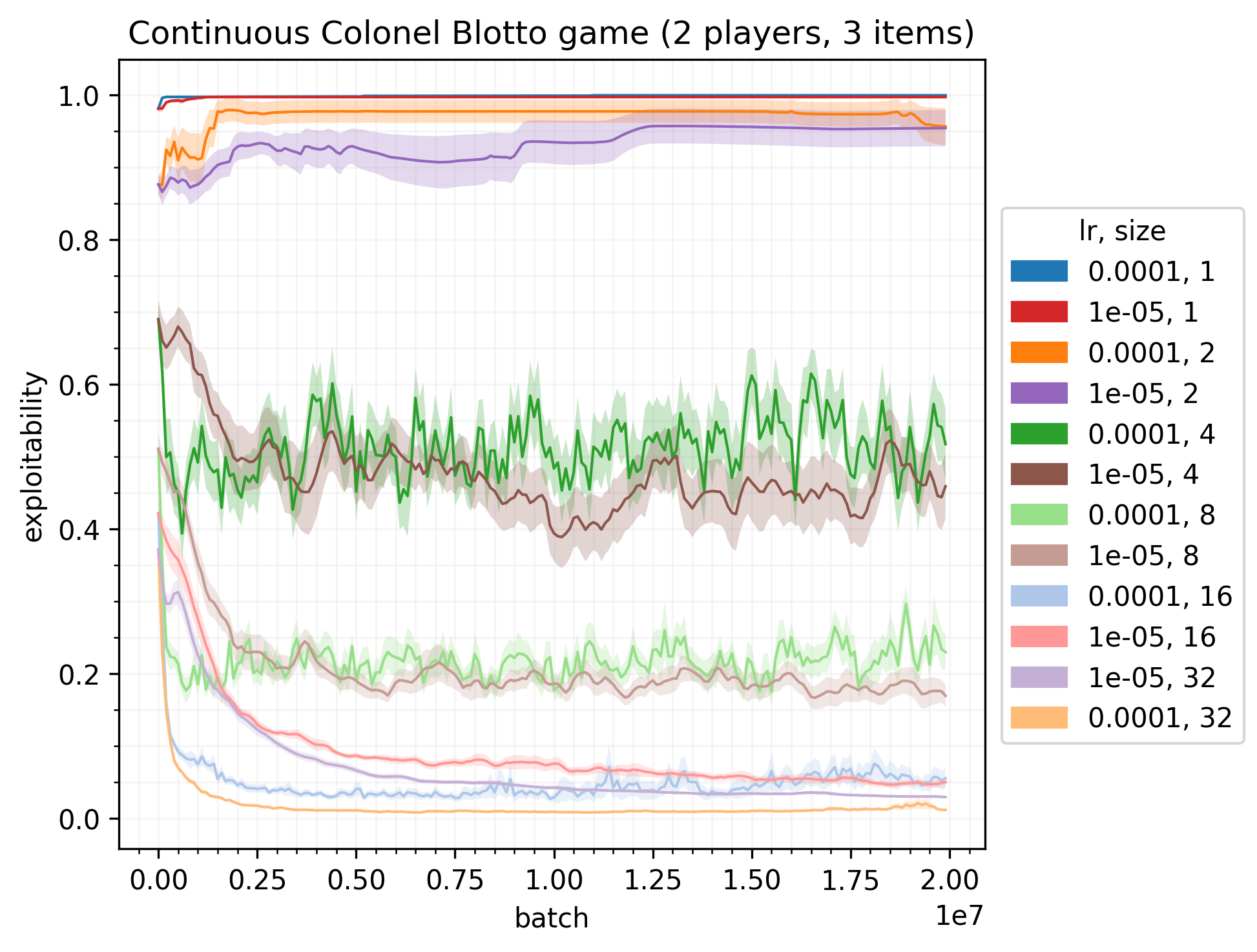}
    \includegraphics[width=.5\linewidth]{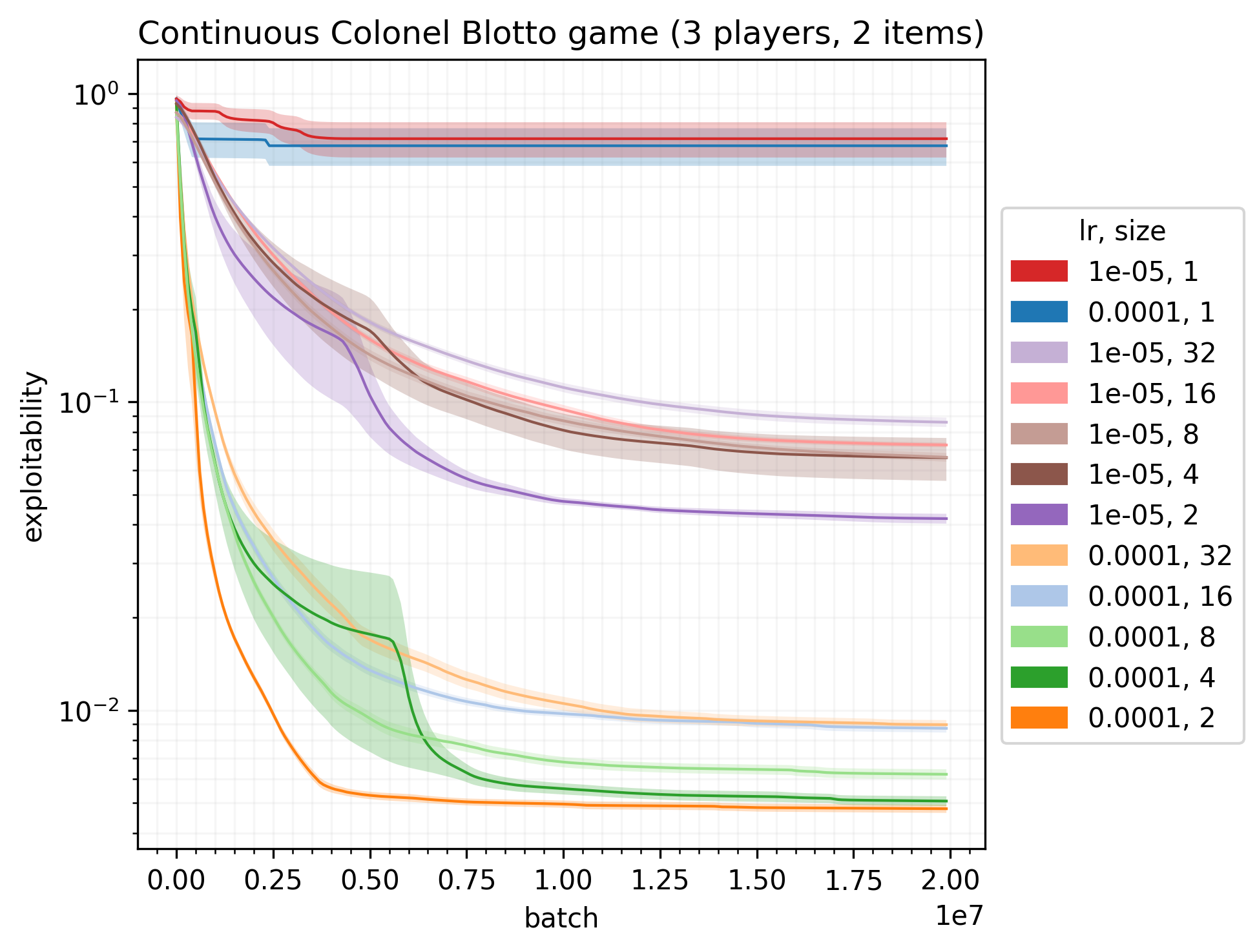}%
    \includegraphics[width=.5\linewidth]{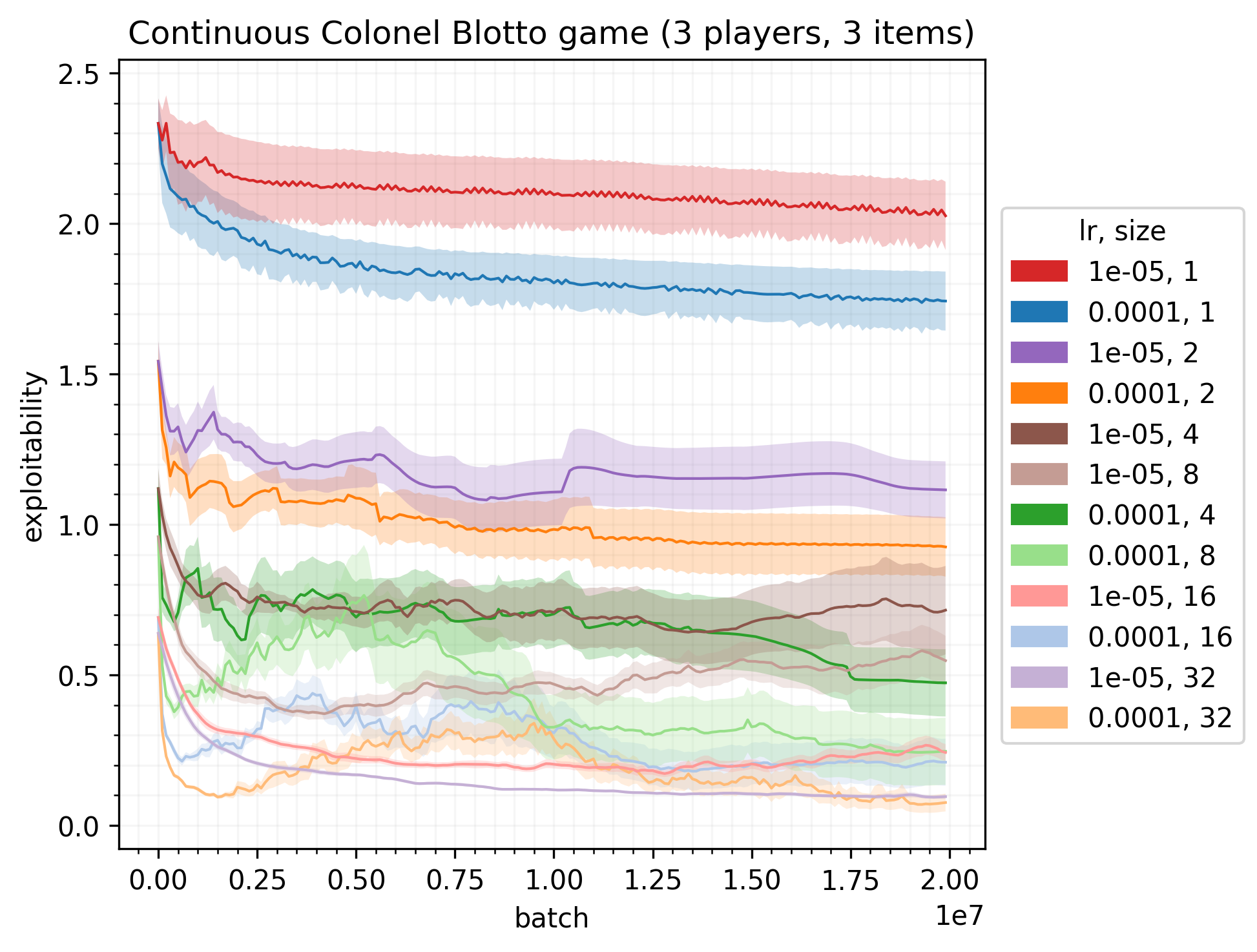}
    \caption{Exploitabilities of learned mixed strategies (part 1).}
    \label{fig:exploitabilities_1}
\end{figure*}

\begin{figure*}
    \centering
    \includegraphics[width=.5\linewidth]{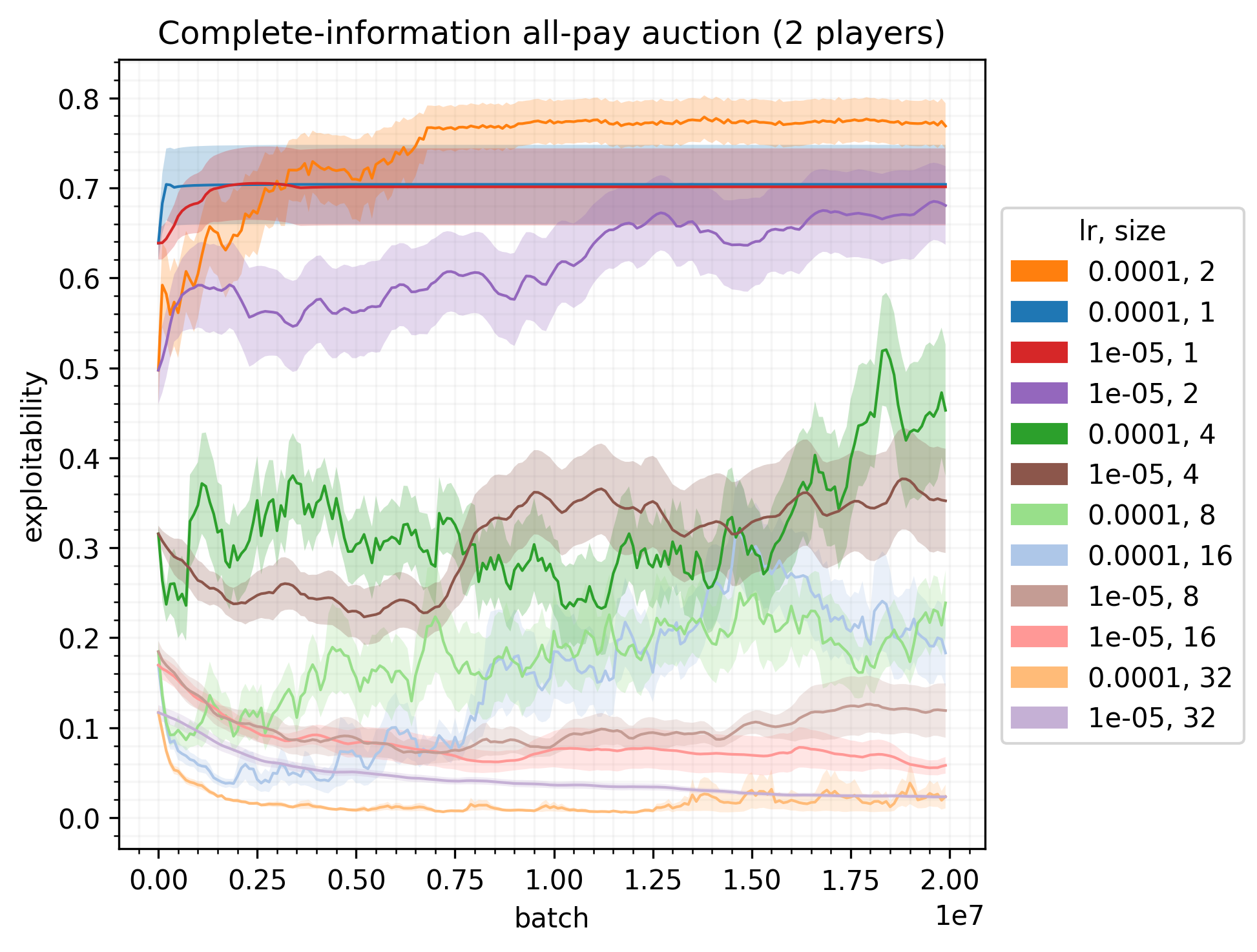}%
    \includegraphics[width=.5\linewidth]{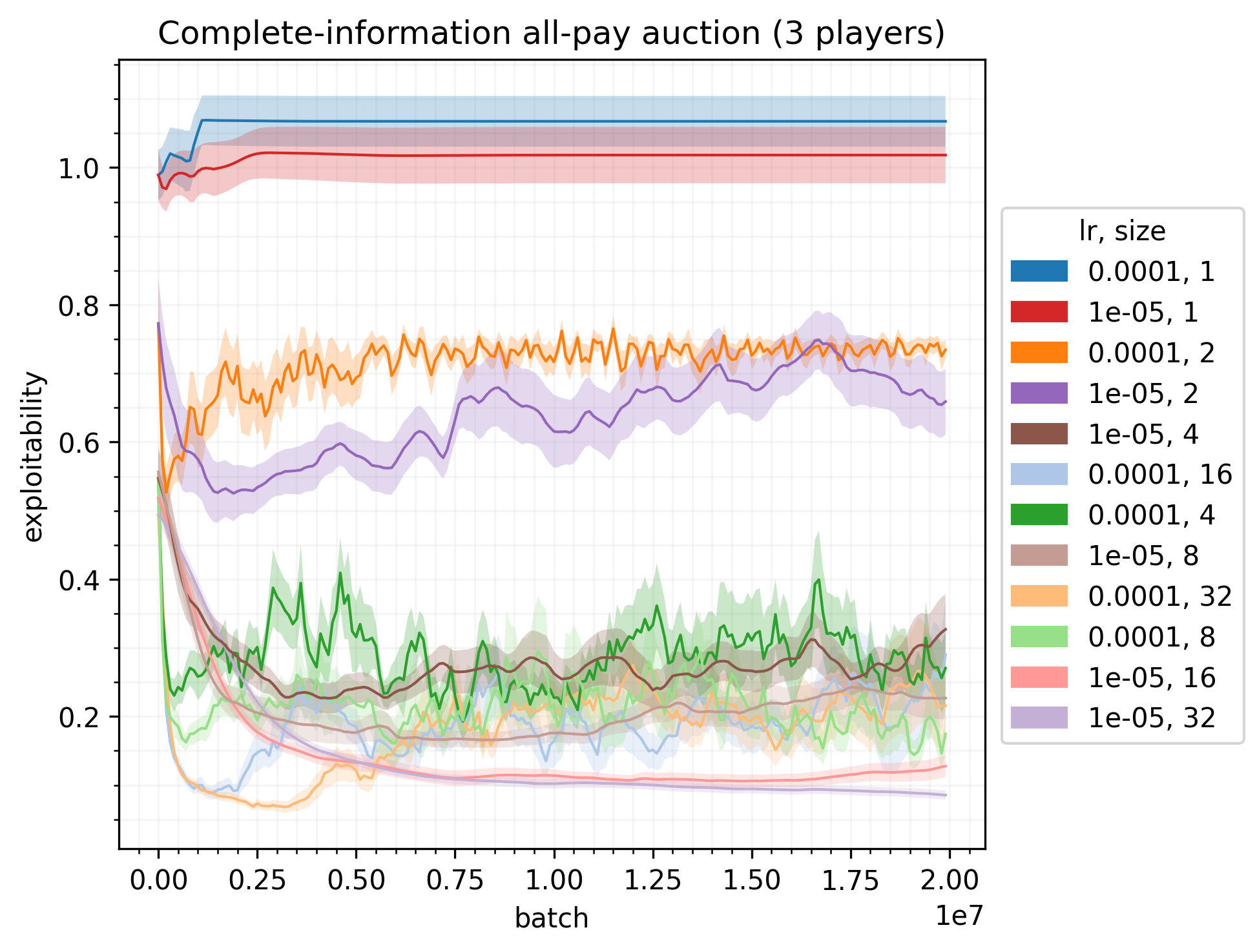}
    \includegraphics[width=.5\linewidth]{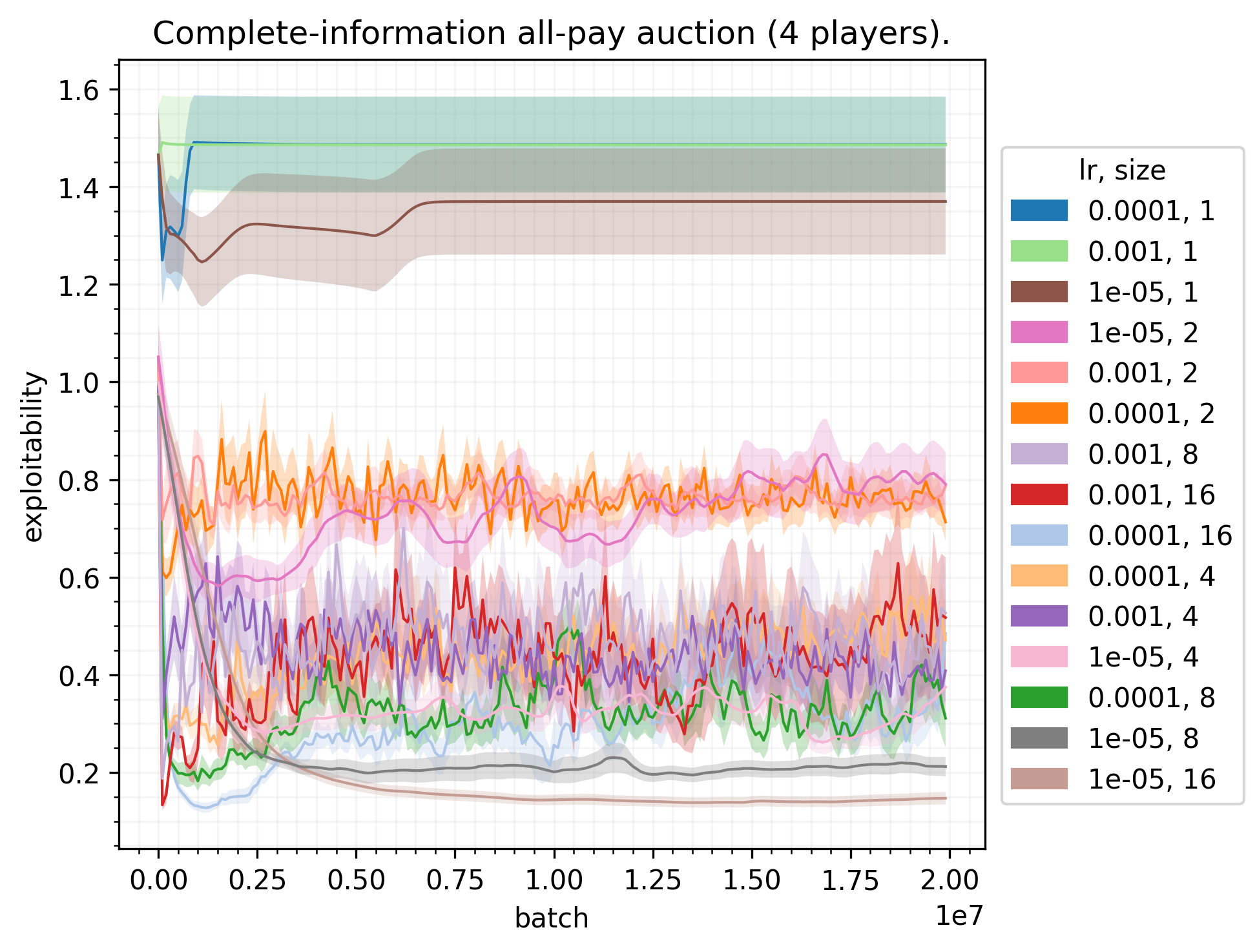}%
    \includegraphics[width=.5\linewidth]{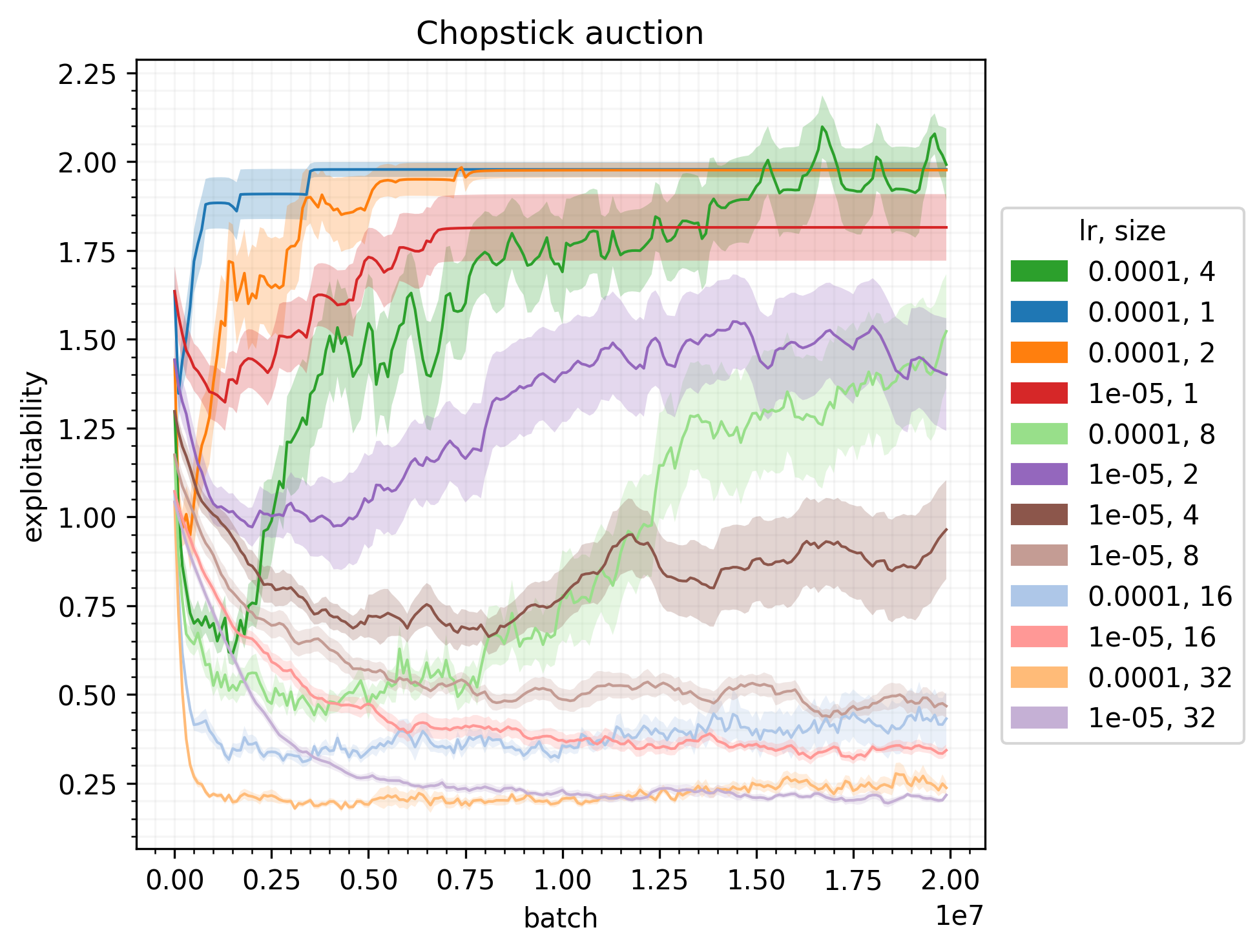}
    \includegraphics[width=.5\linewidth]{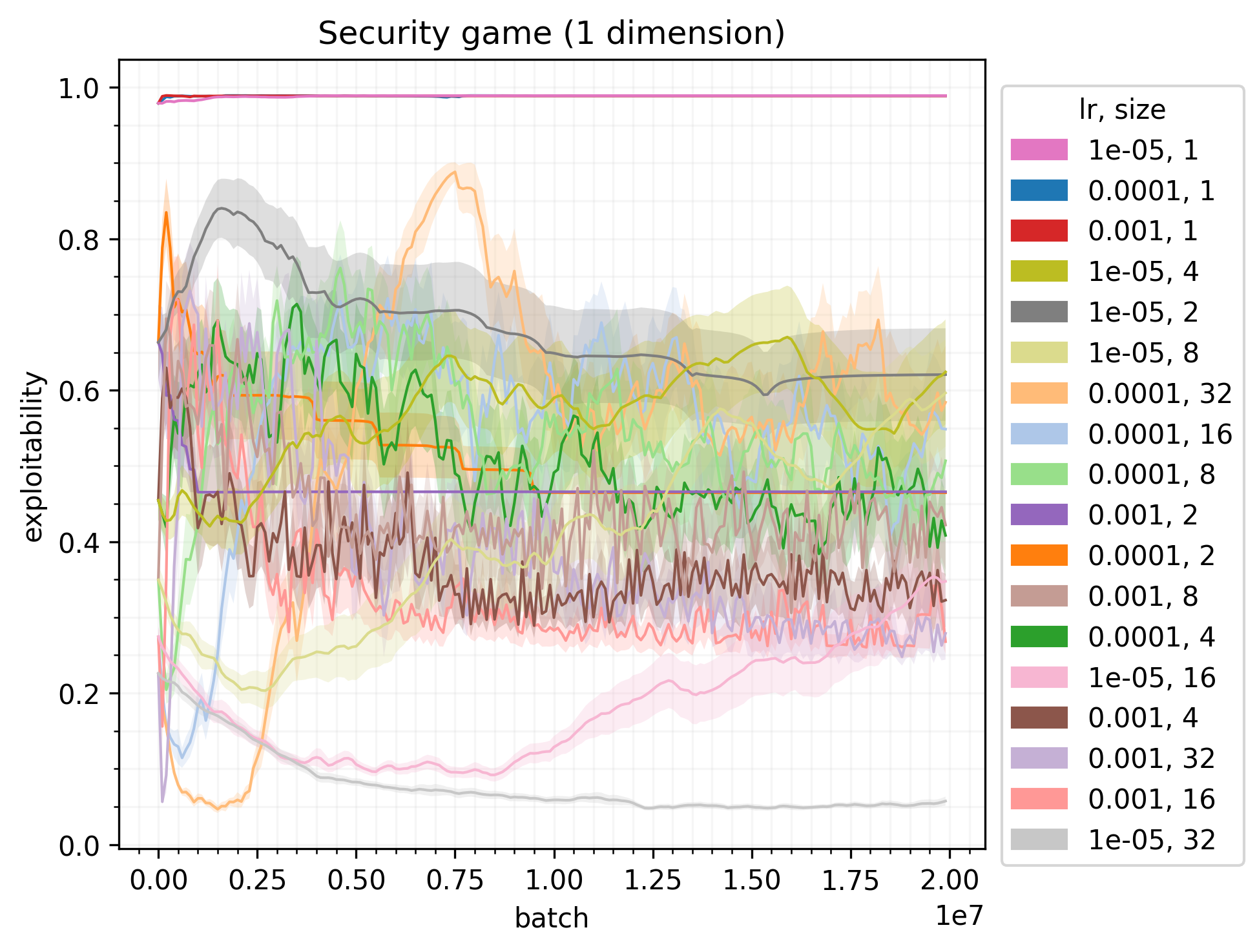}%
    \includegraphics[width=.5\linewidth]{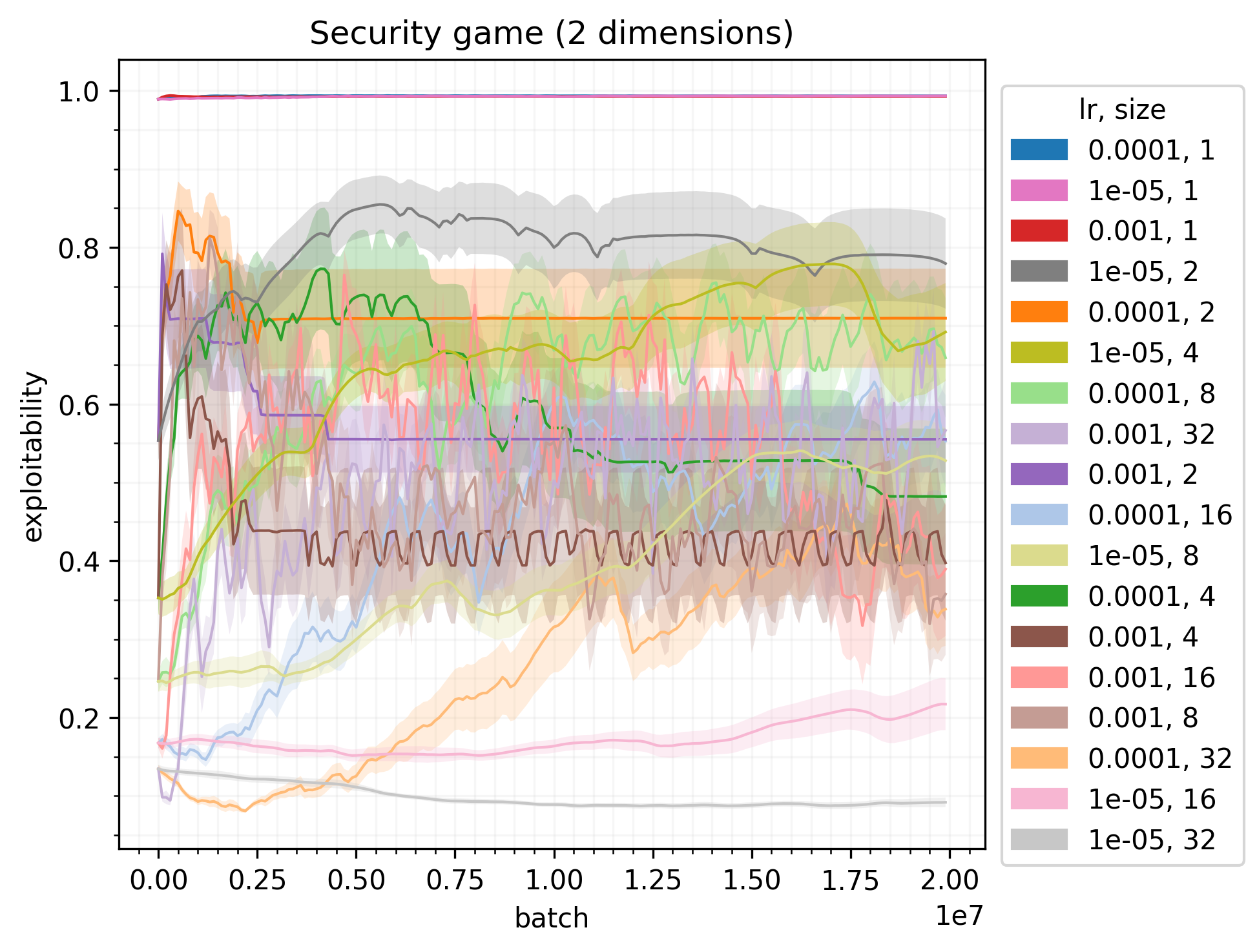}
    \caption{Exploitabilities of learned mixed strategies (part 2).}
    \label{fig:exploitabilities_2}
\end{figure*}

\section{Conclusion}
\label{sec:conclusion}

We introduced an algorithm for computing approximate mixed-strategy NE of continuous-action games.
It is a modification of the DO algorithm, extended to continuous action spaces and multiple agents.
It gradually improves a metagame equilibrium while simultaneously and gradually improving pure strategies against it.
It mitigates the need for exact equilibrium and/or best-response computation on each iteration.
Empirically, when tested on various continuous-action games, our algorithm obtains approximate mixed-strategy NE with low exploitability.
In some games, the quality of the approximation depends, as expected, on the number of pure strategies available to each player.

For future work, we would like to prove the convergence of this algorithm to (approximate) mixed-strategy NE under various conditions, for various classes of games.
To that end, we could analyze the dynamics of the algorithm in cases where the weights are changing and supports are fixed (or slowly-moving), or vice versa.
An analysis of the former case can be found in the appendix.
We also wish to analyze the convergence of this algorithm on discrete-action games with strategies parameterized with continuous variables.

We would also like to apply our algorithm to multi-step \emph{partially-observable stochastic games (POSGs)} with continuous action spaces, in which information sets or observation histories are not explicitly given and agents must instead learn what to remember from observations at each timestep.  
For this, agents can be given memory, \emph{e.g}., using recurrent neural networks~\citep{Rumelhart_1986, Werbos_1988, Hochreiter_1997, Cho_2014}.

Another extension we intend to explore in future work is ``particle resetting'', in which an element of a support (or ``particle'') can be ``reset'', \textit{e.g.}, if its probability falls below some threshold.
The idea behind this approach is to bring back into play particles that appear to be useless or of low value.
A particle may be reset in a way that is influenced by the locations of the other particles.
For example, it can be brought closer to better-performing ones, while at the same time maintaining some distance from them (and potentially yet different particles) to encourage diversity and exploration.

\section{Acknowledgments}

This material is based on work supported by the Vannevar Bush Faculty Fellowship ONR N00014-23-1-2876, National Science Foundation grants RI-2312342 and RI-1901403, ARO award W911NF2210266, and NIH award A240108S001.

\bibliographystyle{plainnat}
\bibliography{dairefs,references}
\appendix
\section{Additional related work}

\citet{perkins2015mixed} proposed an actor-critic reinforcement learning algorithm that adapts mixed strategies over continuous action spaces, and prove that the continuous dynamics of the process converge to equilibrium in the case of potential games.

\citet{Fichtl_2022} presented an approach that computes distributional strategies \citep{Milgrom85:Distributional} (a form of mixed strategies for a Bayesian game) on a discretized version of the game via online convex optimization, specifically \emph{simultaneous online dual averaging (SODA)}.
They prove that the equilibrium of the discretized game approximates an equilibrium in the continuous game.

\citet{Bichler_2021} presented a learning method that represents strategies as neural networks and applies simultaneous gradient dynamics to provably learn local equilibria.
They focused on symmetric auction models, assuming symmetric prior distributions and symmetric equilibrium bidding strategies.
\citet{Bichler_2022} extended this to asymmetric auctions, where one needs to train multiple neural networks.
The previous two papers restrict their attention to pure strategies.

\citet{ijcai2023p317} tackled continuous-action Bayesian games in generality by modeling mixed strategies using randomized policy networks and applying zeroth-order optimization techniques that combine smoothed gradient estimators with equilibrium-finding dynamics.

\subsection{Abstraction}

\citet{Brown14:Regret} studied optimizing a parameter vector for a player in a two-player zero-sum game (\emph{e.g.}, optimizing bet sizes to use in poker).
They propose a custom gradient descent algorithm that provably finds a locally optimal parameter vector.
It optimizes the parameter vector while simultaneously finding an equilibrium.
This amounts to the first action abstraction algorithm (algorithm for selecting a small number of discrete actions to use from a \emph{continuum} of actions--a key preprocessing step for solving large games using current equilibrium-finding algorithms) with convergence guarantees for extensive-form games.

Abstraction is a key component in solving extensive-form games of incomplete information.
\citet{Kroer14:Extensive} introduced a theoretical framework that can be used to give bounds on solution quality for any perfect-recall extensive-form game.
The framework uses a new notion for mapping abstract strategies to the original game, and it leverages a new equilibrium refinement for analysis.
Using this framework, they develop the first general lossy extensive-form game abstraction method with bounds.

Many real-world domains require modeling with continuous action spaces.
This is usually handled by heuristically discretizing the continuous action space without solution quality bounds.
Leveraging recent results on abstraction solution quality, \citet{Kroer15:Discretization} developed the first framework for providing bounds on solution quality for discretization of continuous action spaces in extensive-form games.

\citet{Brown15:Simultaneous} introduced a method that combines abstraction with equilibrium finding by enabling actions to be added to the abstraction at run time.
This allows an agent to begin learning with a coarse abstraction, and then to strategically insert actions at points that the strategy computed in the current abstraction deems important.
The algorithm can quickly add actions to the abstraction while provably not having to restart the equilibrium finding.
It enables anytime convergence to an NE of the full game even in infinite games.

\citet{Kroer16:Imperfect} presented the first general, algorithm-agnostic, solution quality guarantees for NE and approximate self-trembling equilibria computed in imperfect-recall abstractions, when implemented in the original (perfect-recall) game.

\subsection{Fictitious play}

\citet{Brown51:Iterative} introduced \emph{fictitious play (FP)}, a popular game-theoretic model of learning in games.
In FP, players repeatedly play a game.
At each iteration, each player chooses a best response to their opponents' historical average strategies.
The average strategy profile converges to an NE in certain classes of games, including two-player zero-sum and potential games.
Players may update their beliefs simultaneously or alternately \citep{Berger_2007}.
\citet{leslie2006generalised} introduced \emph{generalised weakened FP (GWFP)}, which generalises FP by allowing approximate best responses and perturbed average strategy updates.

\citet{perkins2014stochastic} extended stochastic FP to the continuous action space framework.
They studied the limiting behaviour of stochastic FP using the associated smooth best response dynamics on the space of finite signed measures.
Using this approach, they showed stochastic FP converges to an equilibrium in two-player zero-sum games.

\citet{Heinrich_2015} introduced \emph{full-width extensive-form FP (XFP)}, which extends FP to extensive-form (multi-step) games.
They also introduced \emph{fictitious self-play (FSP)}, a machine learning framework that implements GWFP in behavioural strategies and in a sample-based fashion.
In FSP, players repeatedly play a game and store their experience in memory.
Instead of playing a best response, they act cautiously and mix between their best responses and average strategies.
At each iteration players replay their experience of play against their opponents to compute an approximate best response.
Similarly, they replay their experience of their own behaviour to learn a model of their average strategy.

\citet{Heinrich16:Deep} introduced \emph{neural fictitious self-play (NFSP)}, which combines FSP with neural network function approximation and deep reinforcement learning.
An NFSP agent consists of two neural networks.
The first network is trained by reinforcement learning from memorized experience of play against fellow agents.
This network learns an approximate best response to the historical behaviour of other agents.
The second network is trained by supervised learning from memorized experience of the agent’s own behaviour.

\citet{Kamra_2019} introduced \emph{DeepFP}, an approximate FP algorithm for two-player games with continuous action spaces.
They demonstrate stable convergence to equilibrium on several classic games and a large forest security domain.
DeepFP represents players' approximate best responses via generative neural networks, which are highly expressive implicit density approximators.
They employ a game-model network that is a differentiable approximation of the players' utilities given their actions, and train these networks end-to-end in a model-based learning regime.
This allows working in the absence of gradients for players.

\citet{Ganzfried_2021} introduced \emph{redundant FP (RFP)}, an algorithm for approximating equilibria in continuous games, and applied it to a continuous Colonel Blotto game.
Unlike ours, this algorithm requires a best-response oracle as a subroutine (\emph{e.g.}, a mixed-integer linear program solver for the continuous Colonel Blotto game).

\section{Additional figures}

In this section, we include additional figures that did not fit in the body of the paper.

\begin{figure}
    \centering
    \includegraphics[width=.6\linewidth]{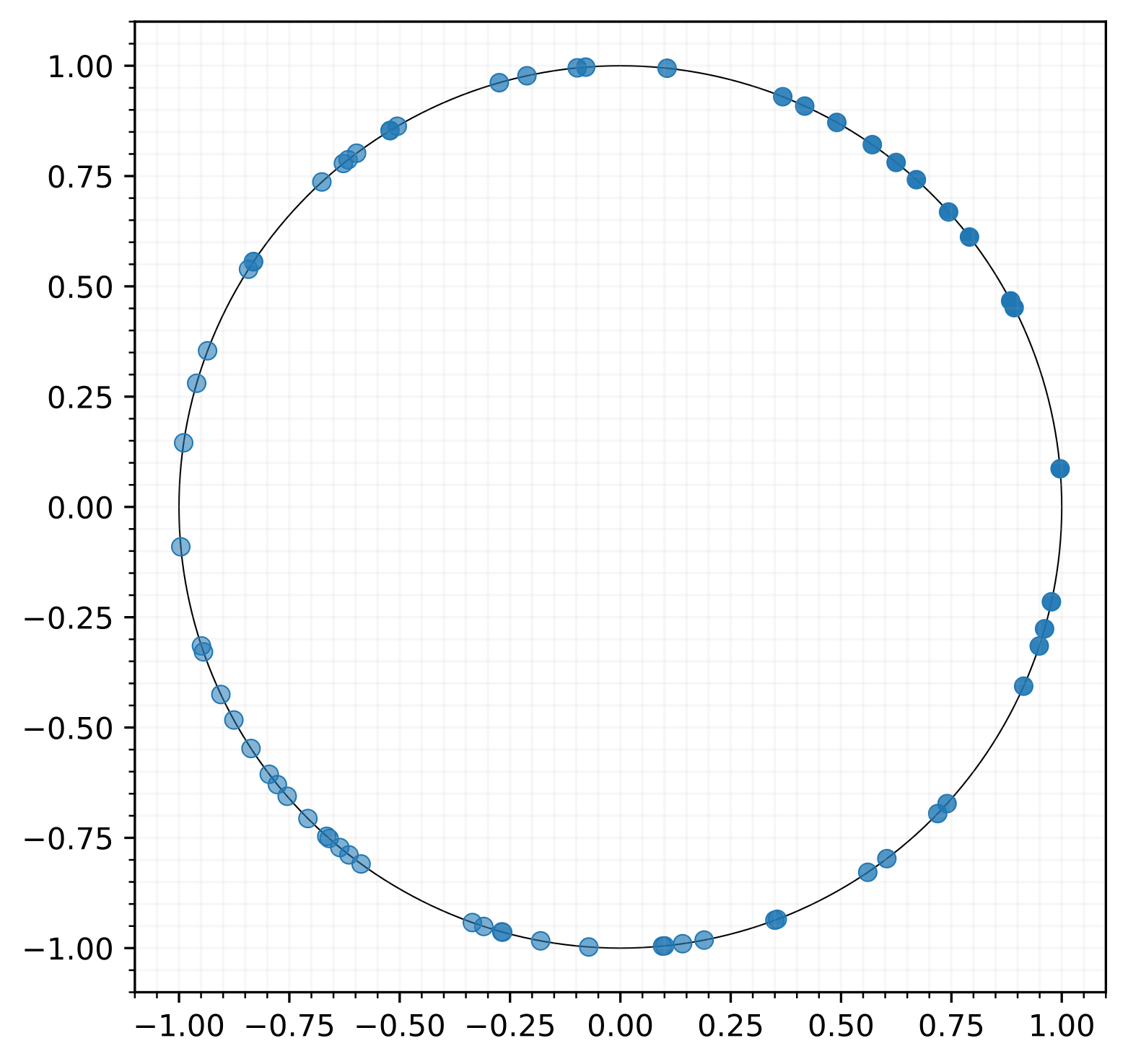}
    \includegraphics[width=.6\linewidth]{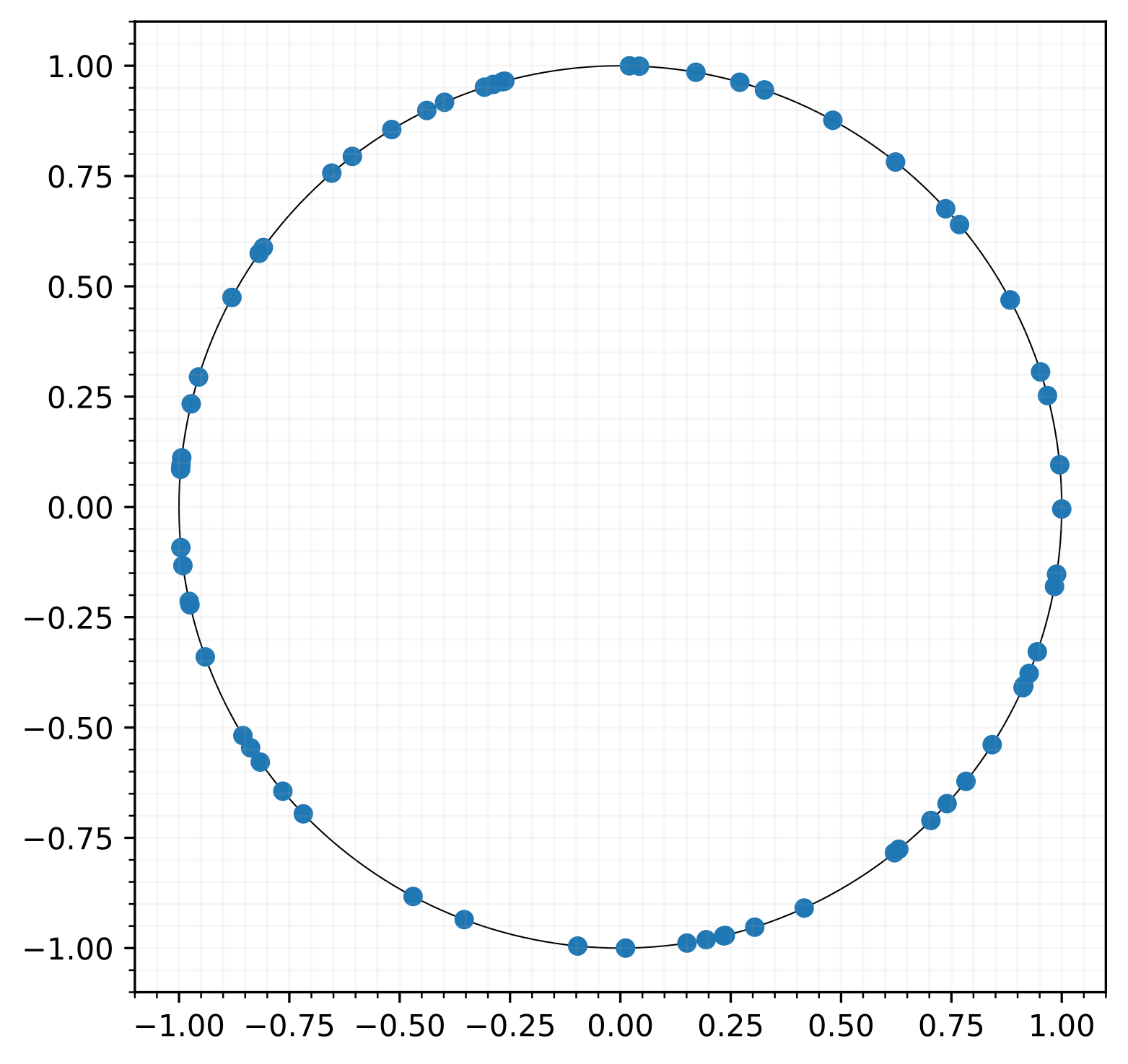}
    \caption{Learned strategies for the circle game.
    Each dot shows an action and its opacity shows its probability.}
    \label{fig:circle_strategies}
\end{figure}

\begin{figure}
    \centering
    \includegraphics[width=.7\linewidth]{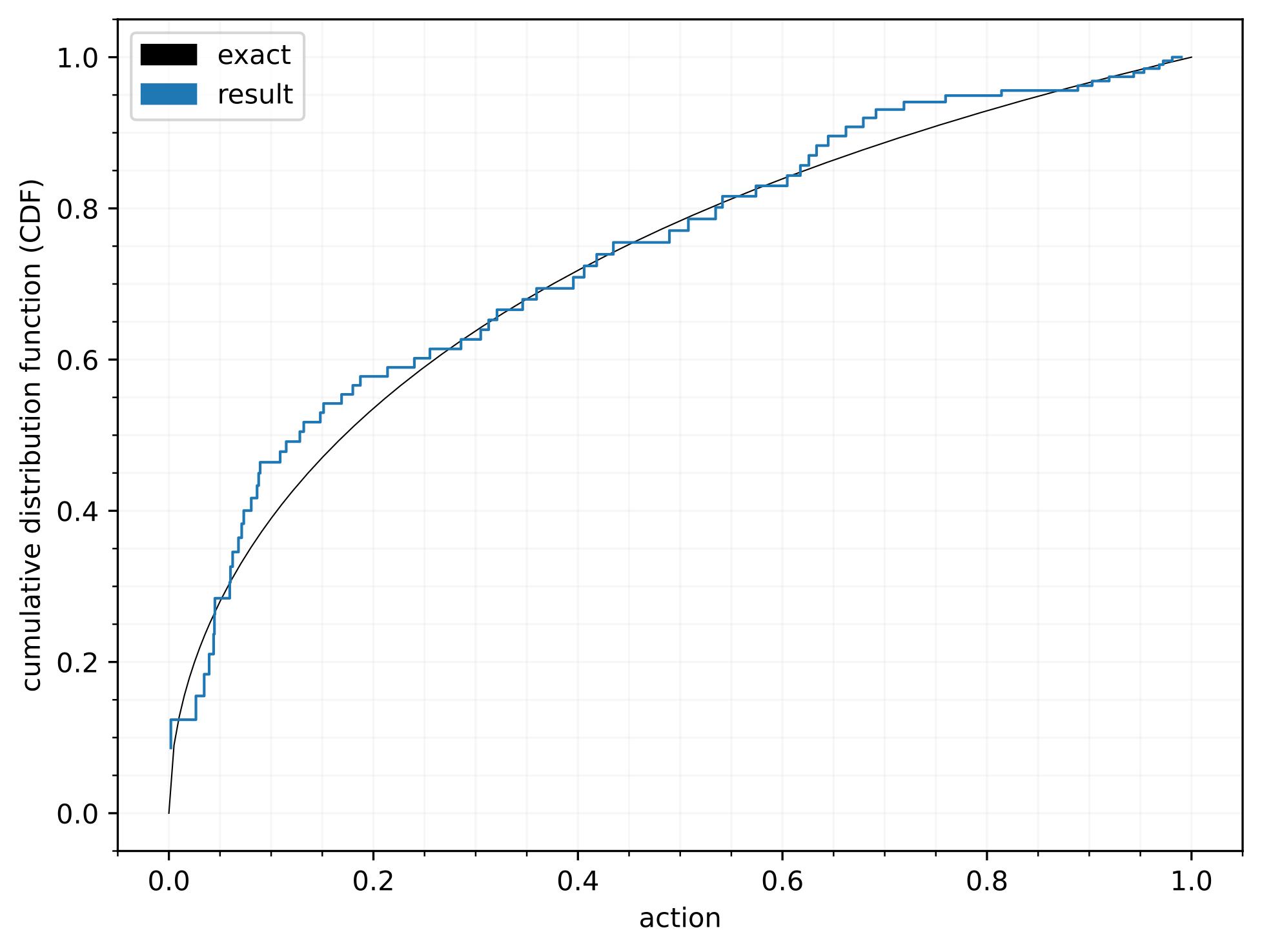}
    \includegraphics[width=.7\linewidth]{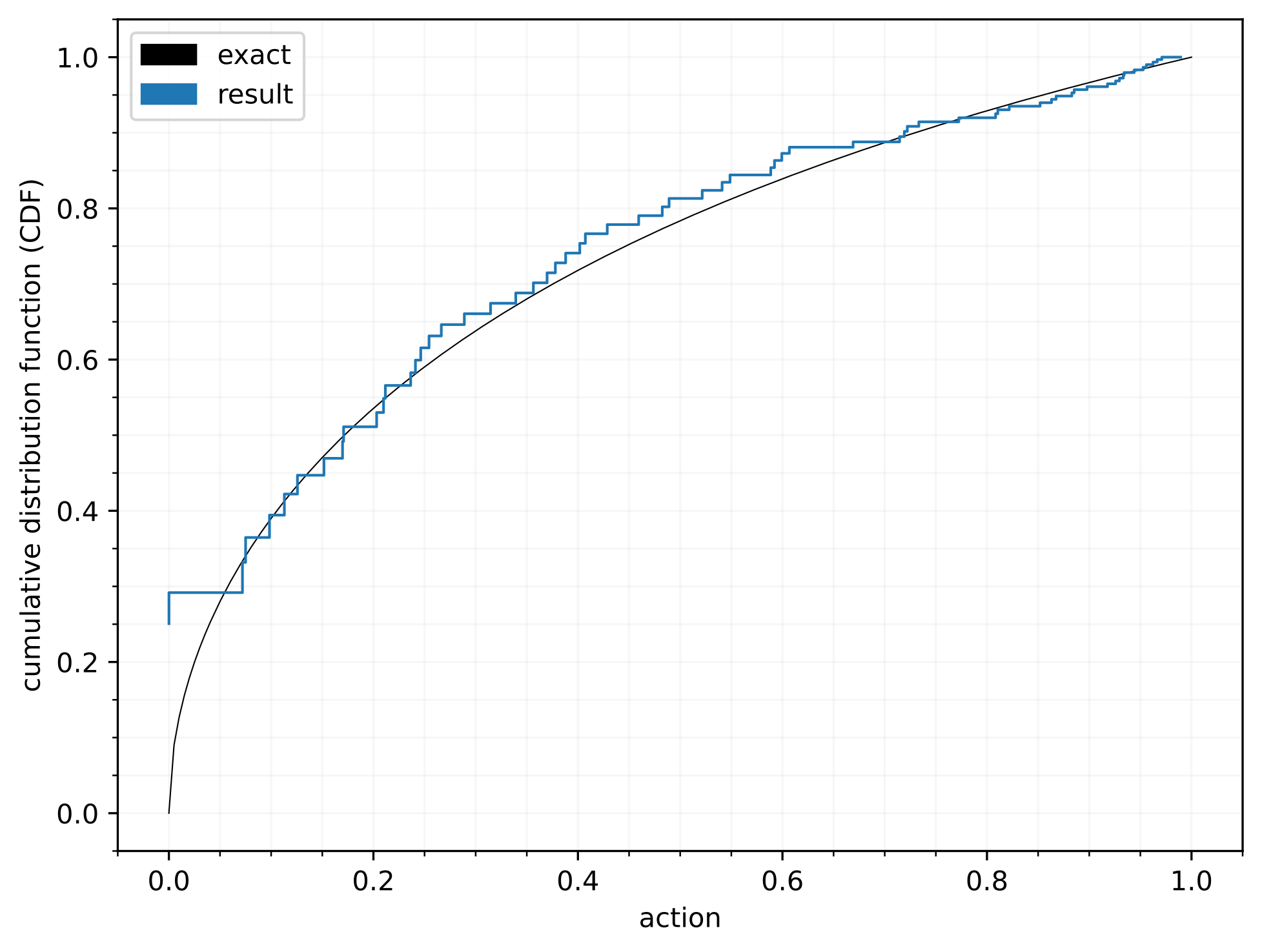}
    \caption{Learned strategies for the Glicksberg--Gross game.
    The exact equilibrium strategies are shown in black.}
    \label{fig:glicksberg_gross_strategies}
\end{figure}

\begin{figure}
    \centering
    \includegraphics[width=.7\linewidth]{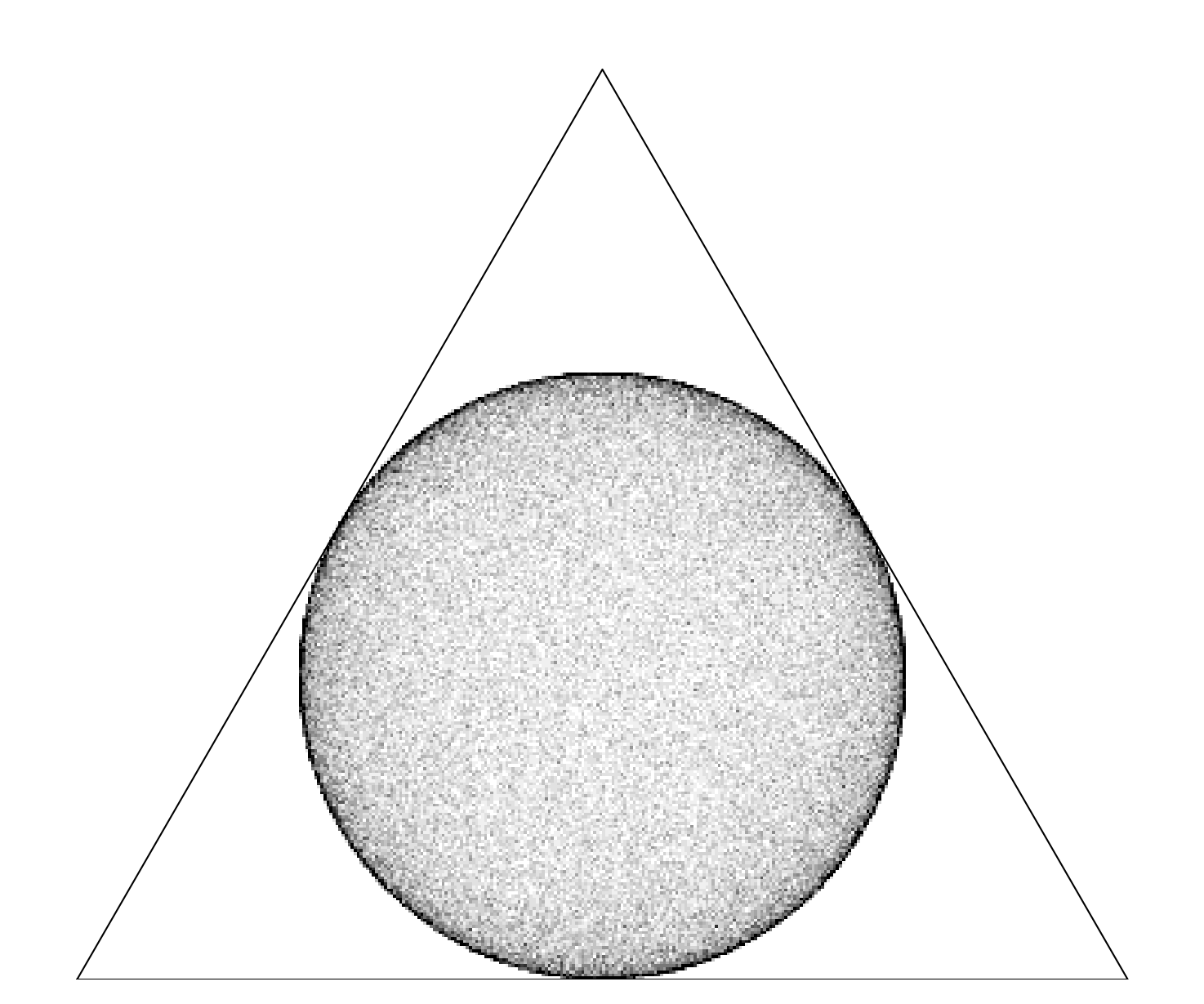}
    \caption{Exact equilibrium for the 2-player, 3-item continuous Colonel Blotto game.
    Darkness shows density.}
    \label{fig:blotto_equilibrium}
\end{figure}

\begin{figure}
    \centering
    \includegraphics[width=.7\linewidth]{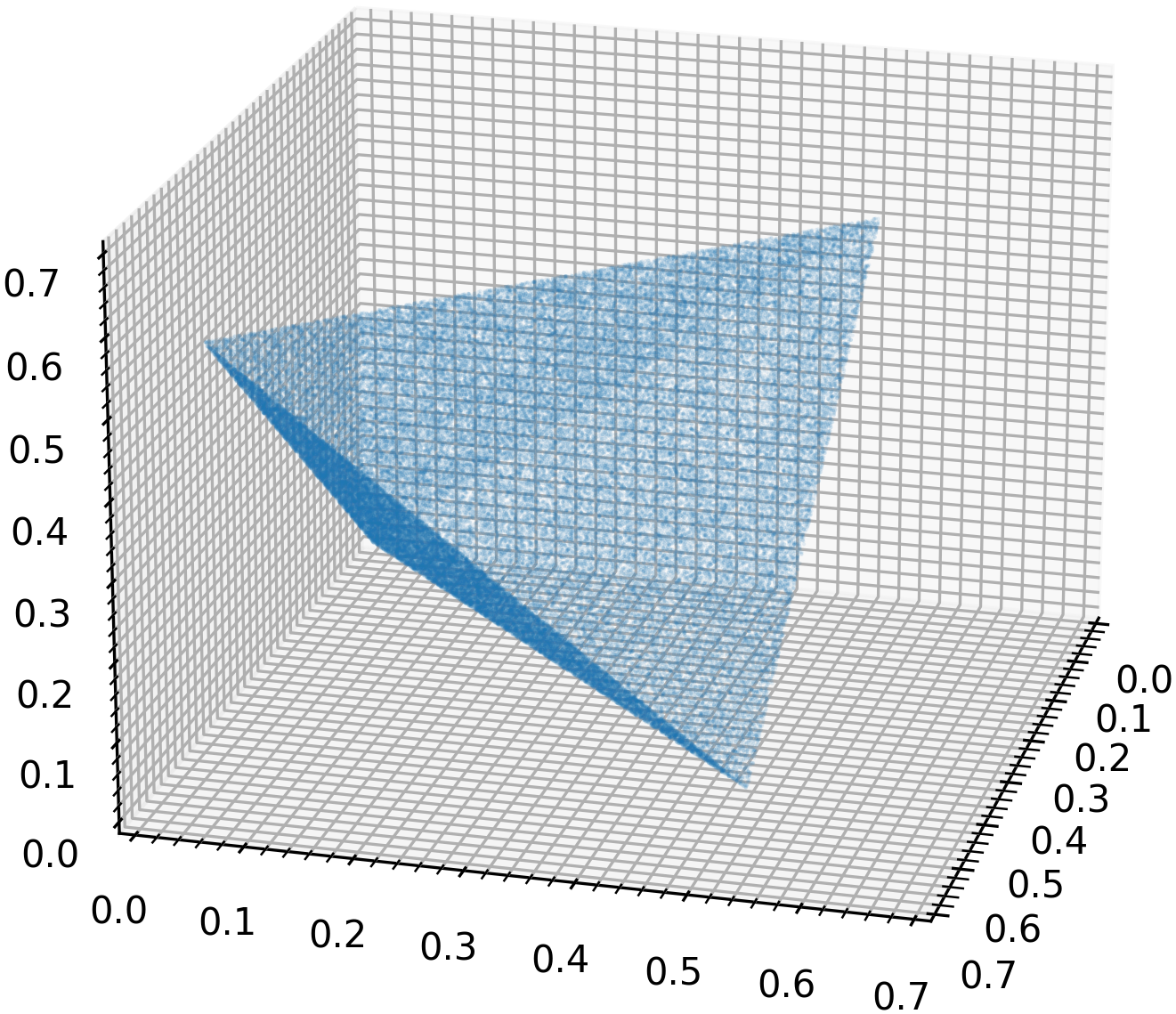}
    \caption{Exact equilibrium for the chopstick auction.}
    \label{fig:chopstick_equilibrium}
\end{figure}

\begin{figure}
    \centering
    \includegraphics[width=.7\linewidth]{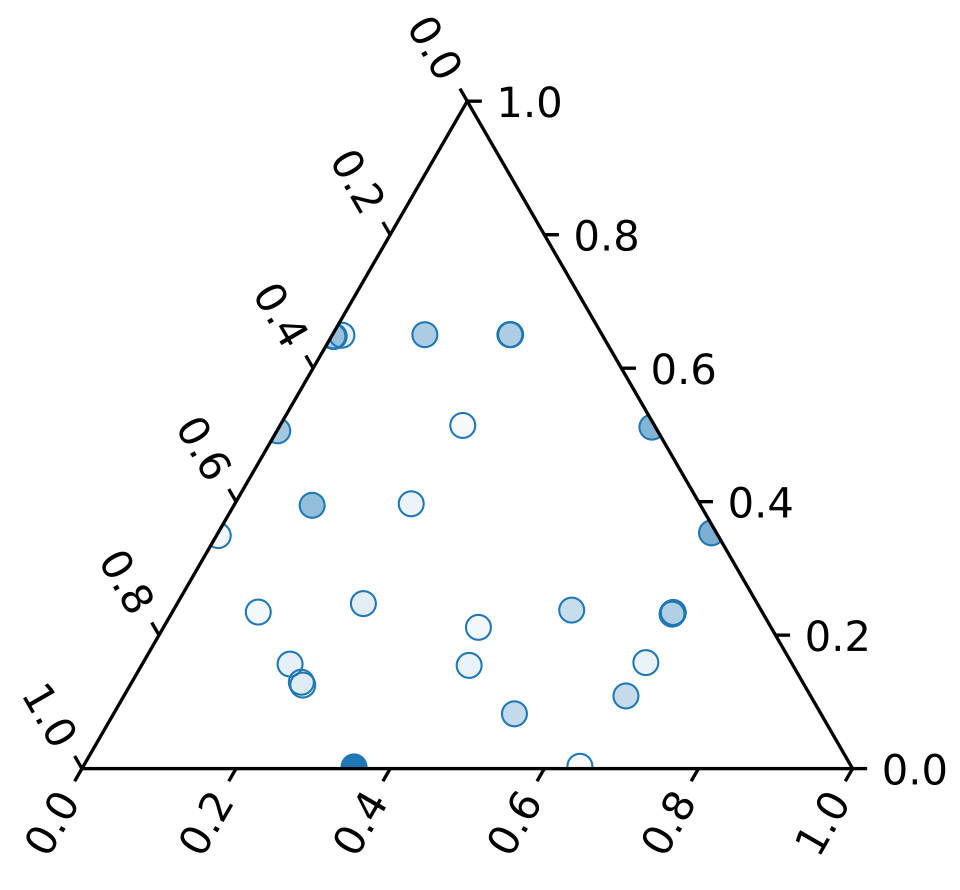}
    \includegraphics[width=.7\linewidth]{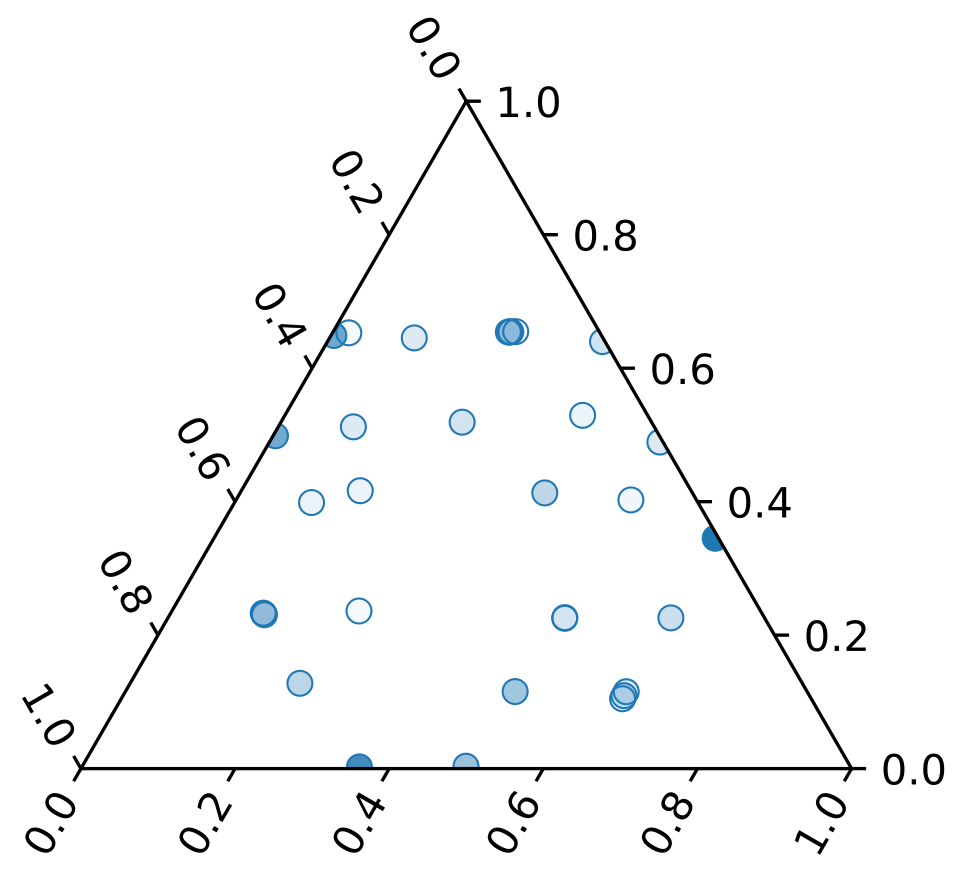}
    \caption{Learned strategies for the 2-player, 3-item continuous Colonel Blotto game.
    Each dot shows an action and its opacity shows its probability.}
    \label{fig:blotto_strategies}
\end{figure}

\begin{figure}
    \centering
    \includegraphics[width=.7\linewidth]{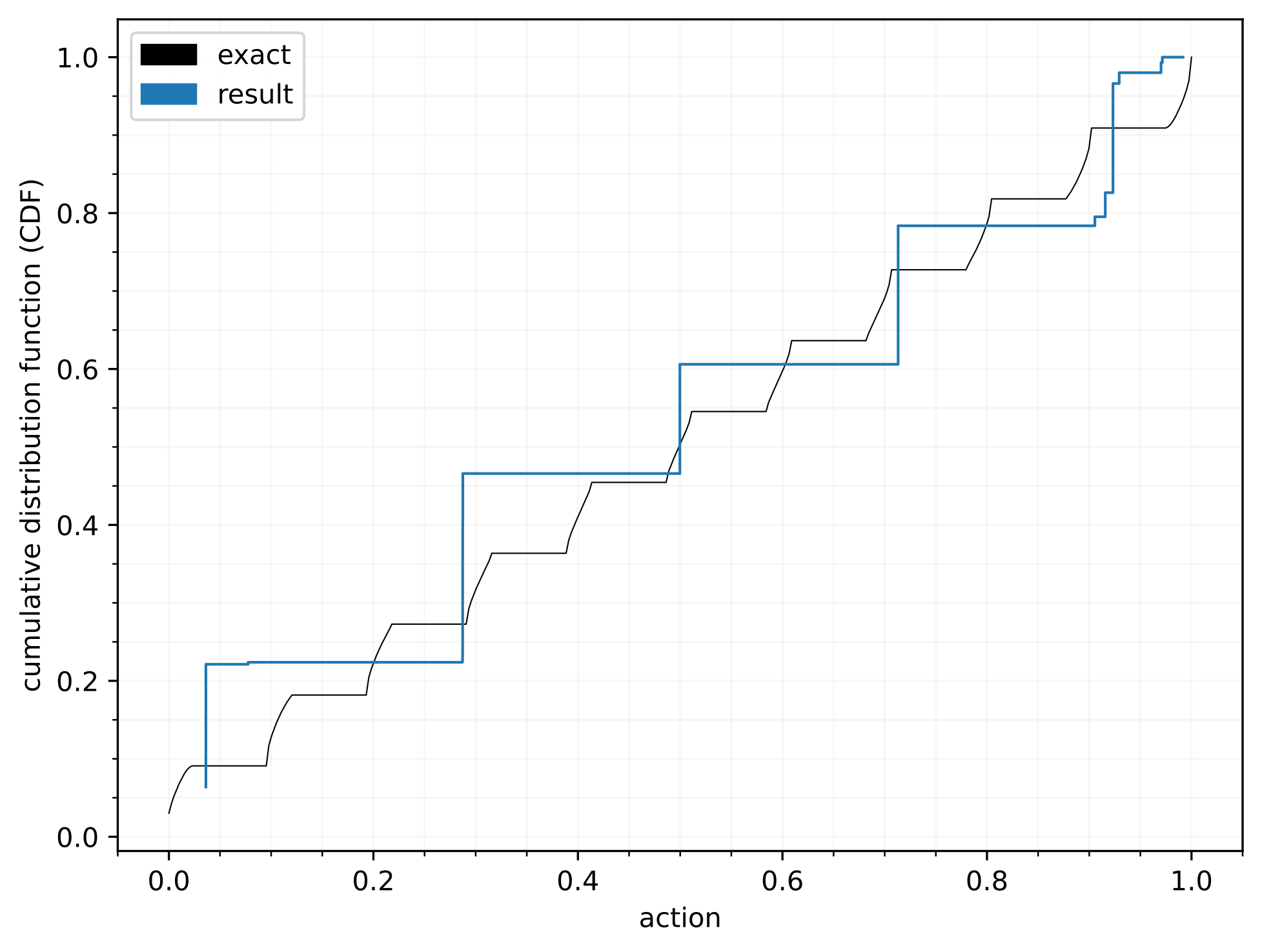}
    \includegraphics[width=.7\linewidth]{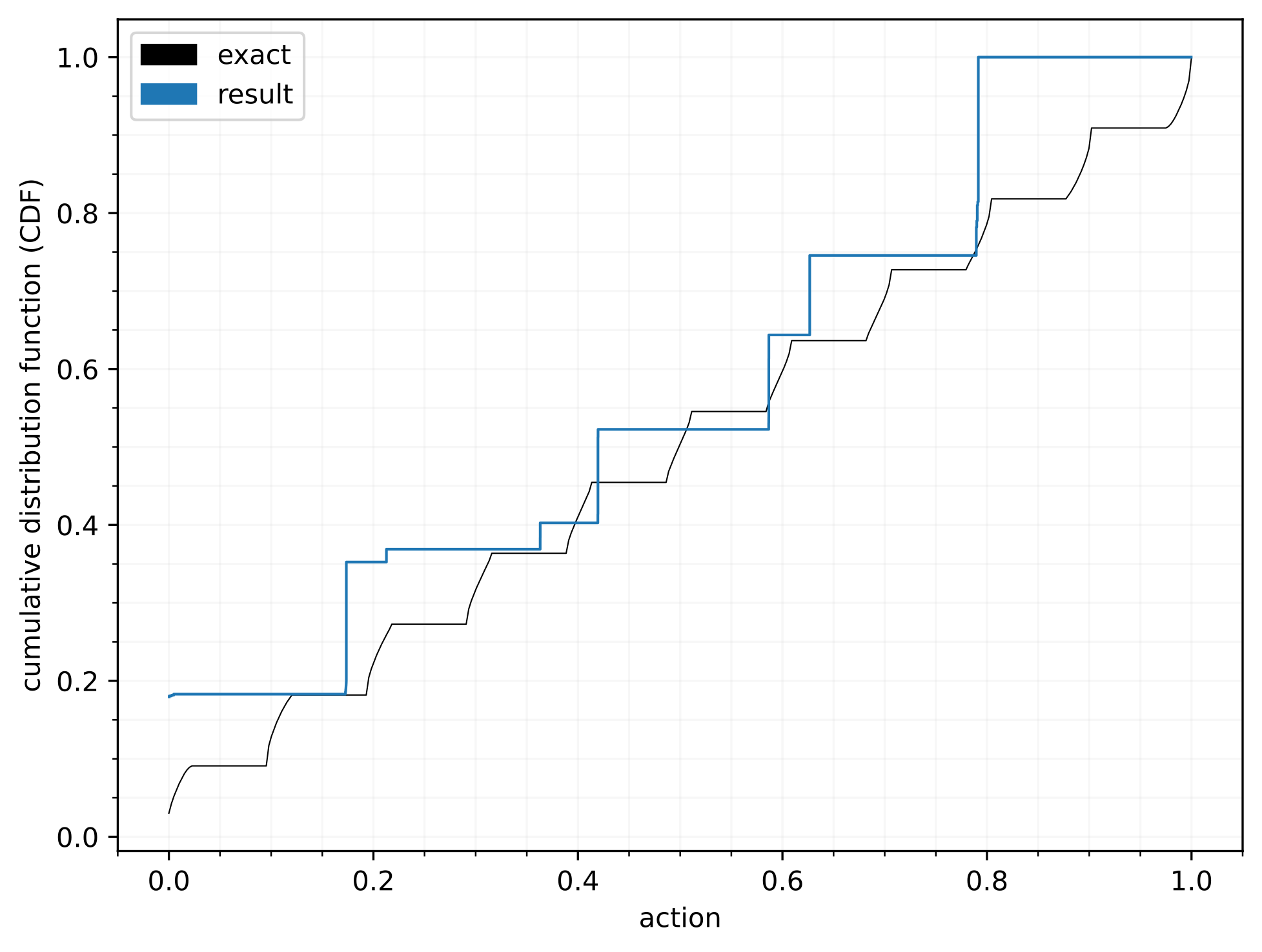}
    \caption{Learned strategies for the discoordination game.
    The exact equilibrium strategies are shown in black.}
    \label{fig:security_strategies}
\end{figure}

\begin{figure}
    \centering
    \includegraphics[width=.7\linewidth]{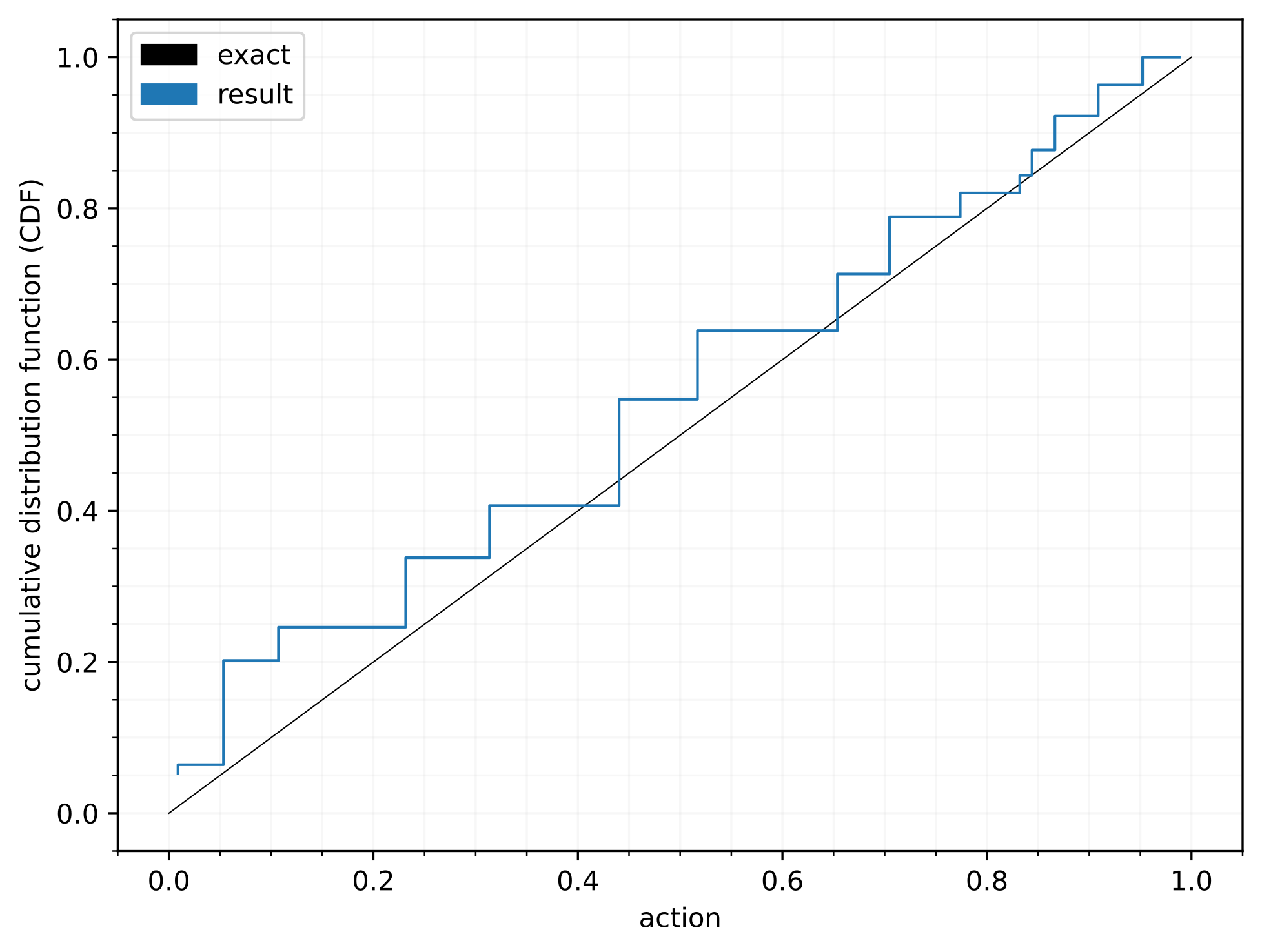}
    \includegraphics[width=.7\linewidth]{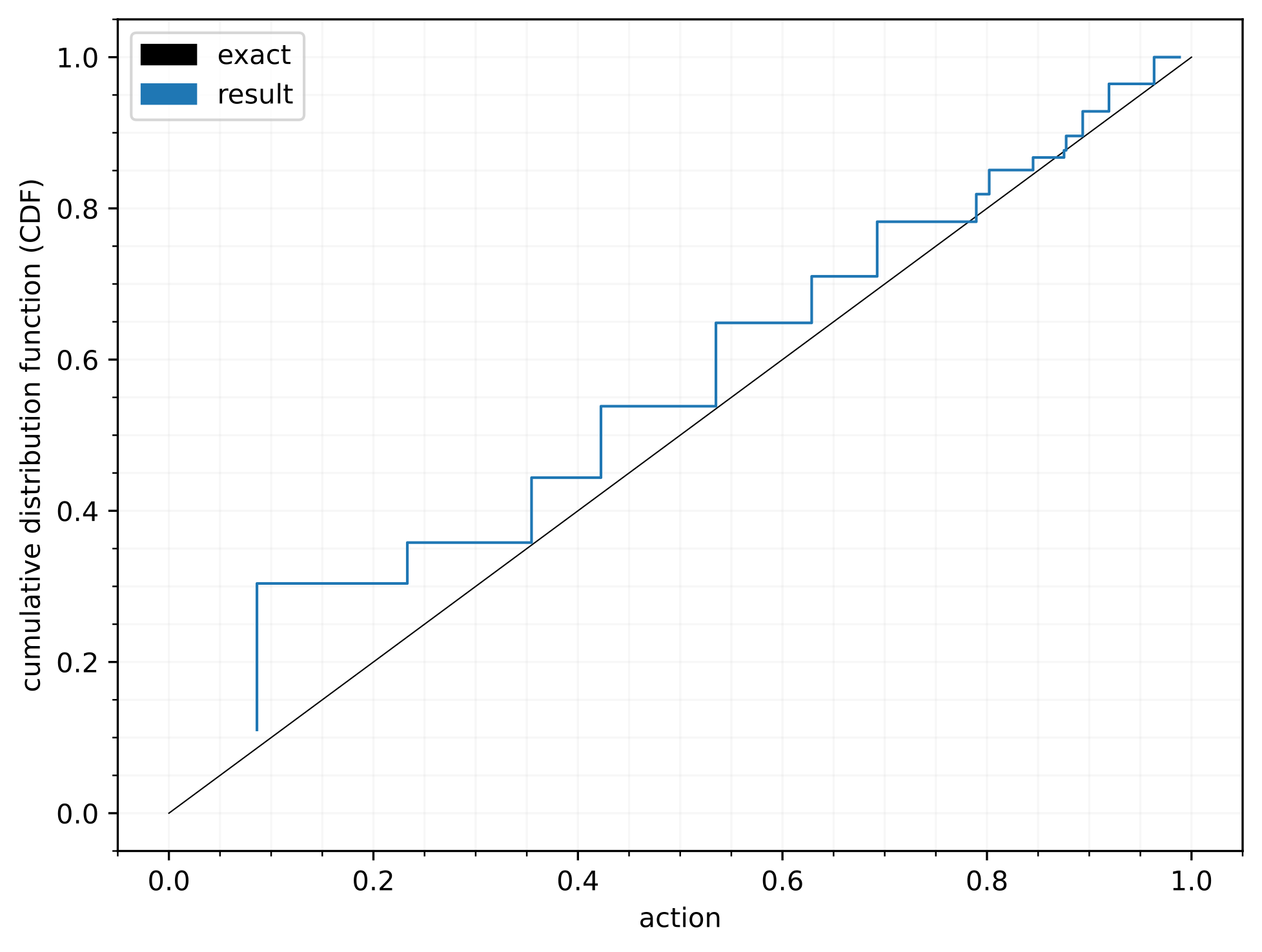}
    \caption{Learned strategies for the 2-player complete-information all-pay auction.
    The exact equilibrium strategies are shown in black.}
    \label{fig:auction_strategies}
\end{figure}

\section{Theoretical analysis}
\label{sec:theoretical}

In this section, we present a theoretical analysis of our approach.
A general convergence proof is beyond the scope of this paper, though a potentially interesting question for future research. 
It is not uncommon in this field for methods to be introduced before theoretical guarantees are obtained.
Indeed, the latter is often difficult enough to stand alone as a research contribution. 
There have been many purely theoretical papers in the field of game solving on addressing such open theoretical questions with guarantees or lower bounds.
Furthermore, most of the great breakthroughs in AI game-playing lack theoretical guarantees for the technique that is actually used in practice, especially when they employ neural networks or abstraction techniques.
Theoretical analyses in the literature often make assumptions---such as linearity, (quasi)convexity, \emph{etc.}---that are not always satisfied in practice.

One component of our algorithm is incrementally performing exploitability descent on the metagame at each step.
Assuming the supports are fixed or slow-moving, we can analyze the conditions under which exploitability descent yields an equilibrium for the metagame.

\citet{Lockhart_2019} analyzed exploitability descent in two-player, zero-sum, extensive-form games with finite action spaces.
As stated by \citet{goktas2022exploitability} minimizing exploitability is a logical approach to computing NE, especially in cases where exploitability is convex, such as in the pseudo-games that result from replacing each player's utility function in a monotone pseudo-game by a second-order Taylor approximation \citep{flam1994noncooperative}.

\subsection{Convex exploitability}

In this subsection, we prove that the exploitability function is convex for certain classes of games.

\begin{definition}
Call a game ``regular'' if it has convex strategy sets and its utility function is of the form
\begin{align}
    u_i(x) = f_i(x_i) + \sum_{j \neq i} g(x_i, x_j)
\end{align}
where \(g_{ij}\) is convex in its second argument.
\end{definition}

\begin{theorem}
\label{thm:regular}
A constant-sum regular game has convex exploitability.
\end{theorem}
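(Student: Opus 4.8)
The plan is to split the exploitability into its two natural pieces and argue that one is convex while the other is constant. Writing out the definition on the profile space,
\[
\Phi(x) = \sum_{i \in \mathcal{I}} R_i(x) = \sum_{i \in \mathcal{I}} \sup_{r_i \in \mathcal{S}_i} u_i(x[i/r_i]) - \sum_{i \in \mathcal{I}} u_i(x).
\]
First I would note that the domain \(\mathcal{S}_\times = \prod_{i \in \mathcal{I}} \mathcal{S}_i\) is convex, since each \(\mathcal{S}_i\) is convex by the definition of a regular game, so asking whether \(\Phi\) is convex on \(\mathcal{S}_\times\) is a well-posed question.

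Next I would dispatch the second sum. By the constant-sum hypothesis, \(\sum_{i \in \mathcal{I}} u_i(x) = c\) for some constant \(c\) and all \(x\), so \(-\sum_{i \in \mathcal{I}} u_i(x) = -c\) is constant and hence trivially convex. I would emphasize that this is the \emph{only} place the constant-sum assumption is used, and that no curvature hypothesis on the \(f_i\) is required, because the entire total-utility term collapses to a constant.

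The crux is the first sum, and specifically the convexity of each best-response value \(x \mapsto \sup_{r_i} u_i(x[i/r_i])\). Fixing \(i\) and a deviation \(r_i \in \mathcal{S}_i\), substitution into the regular utility form yields
\[
u_i(x[i/r_i]) = f_i(r_i) + \sum_{j \neq i} g_{ij}(r_i, x_j),
\]
which, as a function of the full profile \(x\), depends on \(x\) only through the opponents' coordinates \(\{x_j\}_{j \neq i}\). The term \(f_i(r_i)\) is constant in \(x\), and each \(g_{ij}(r_i, \cdot)\) is convex in \(x_j\) by regularity; since a function that is convex in one block and independent of the remaining blocks is jointly convex on the product space, \(u_i(x[i/r_i])\) is convex in \(x\) for every fixed \(r_i\). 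Taking the pointwise supremum over \(r_i\) preserves convexity (a supremum of convex functions is convex), so \(x \mapsto \sup_{r_i} u_i(x[i/r_i])\) is convex, and summing over \(i\) keeps it convex. Adding the constant second term shows \(\Phi\) is convex, as claimed.

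The main obstacle I anticipate is not the convexity bookkeeping but the treatment of the supremum: I would want to confirm that \(\sup_{r_i} u_i(x[i/r_i])\) is finite throughout \(\mathcal{S}_\times\), so that \(\Phi\) is genuinely real-valued rather than extended-real-valued. This finiteness is implicit in using exploitability as a closeness measure and is guaranteed under mild conditions (for instance, when \(\mathcal{S}_i\) is compact and \(u_i\) is upper semicontinuous, so the supremum is attained). I would also flag the minor notational point that the single \(g\) appearing in the definition should be read as a family \(g_{ij}\), with the convexity hypothesis attaching to the second (opponent) argument; the argument uses nothing about \(f_i\) beyond its independence from the opponents' actions.
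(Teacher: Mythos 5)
Your proposal is correct and follows essentially the same route as the paper's proof: split $\Phi$ into the sum of best-response values plus the negated total utility, kill the latter with the constant-sum hypothesis, and observe that for each fixed deviation the regular form $f_i(r_i) + \sum_{j\neq i} g_{ij}(r_i, x_j)$ is convex in $x$, so the supremum and outer sum preserve convexity. Your added remarks on finiteness of the supremum and on the $g$ versus $g_{ij}$ notational slip in the definition are reasonable observations but do not change the argument.
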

\begin{proof}
A sum of convex functions is convex.
A supremum of convex functions is convex.
Therefore,
\begin{align}
    \Phi(x)
    &= \sum_{i \in \mathcal{I}} \mleft( \sup_{y_i \in \mathcal{S}_i} u_i(y_i, x_{-i}) - u_i(x) \mright) \\
    &= \sum_{i \in \mathcal{I}} \sup_{y_i \in \mathcal{S}_i} u_i(y_i, x_{-i}) - \sum_{i \in \mathcal{I}} u_i(x) \\
    &= \sum_{i \in \mathcal{I}} \sup_{y_i \in \mathcal{S}_i} u_i(y_i, x_{-i}) + \text{const} \\
    &= \sum_{i \in \mathcal{I}} \sup_{y_i \in \mathcal{S}_i} \mleft( f_i(y_i) + \sum_{j \neq i} g(y_i, x_j) \mright) + \text{const} \\
    &= \sum_{i \in \mathcal{I}} \sup_{y_i \in \mathcal{S}_i} \mleft( \text{const} + \sum_{j \neq i} \text{convex} \mright) + \text{const} \\
    &= \sum_{i \in \mathcal{I}} \sup_{y_i \in \mathcal{S}_i} \text{convex} + \text{const} \\
    &= \sum_{i \in \mathcal{I}} \text{convex} + \text{const} \\
    &= \text{convex}
\end{align}
\end{proof}

\begin{definition}
A polymatrix game is a game with a utility function of the form
\begin{align}
    u_i(x) = \sum_{j \neq i} x_i^\top A_{ij} x_j
\end{align}
\end{definition}

These are graphical games in which each node corresponds to a player and each edge corresponds to a two-player bimatrix game between its endpoints.
Each player chooses a single strategy for all of its bimatrix games and receives the sum of the resulting payoffs.
In a \emph{constant-sum} polymatrix game, the sum of utilities across all players is constant.
As noted by \citet{cai2011minmax}, ``Intuitively, these games can be used to model a broad class of competitive environments where there is a constant amount of wealth (resources) to be split among the players of the game, with no in-flow or out-flow of wealth that may change the total sum of players’ wealth in an outcome of the game.''
They give an example of a ``Wild West'' game in which a set of gold miners need to transport gold by splitting it into wagons that traverse different paths, each of which may be controlled by thieves who could seize it.

\citet{cai2011minmax} prove a generalization of von Neumann's minmax theorem to constant-sum polymatrix games.
Their theorem implies convexity of equilibria, polynomial-time tractability, and convergence of no-regret learning algorithms to Nash equilibria.
\citet{cai2016zero} show that, in such games,  Nash equilibria can be found efficiently with linear programming.
They also show that the set of coarse correlated equilibria (CCE) collapses to the set of Nash equilibria.

We prove the following result.

\begin{theorem}
A constant-sum polymatrix game has convex exploitability.
\end{theorem}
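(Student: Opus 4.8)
The plan is to recognize a constant-sum polymatrix game as a special case of a constant-sum \emph{regular} game and then invoke Theorem~\ref{thm:regular}, which has already established convex exploitability for that class. First I would match the polymatrix utility \(u_i(x) = \sum_{j \neq i} x_i^\top A_{ij} x_j\) against the regular form \(u_i(x) = f_i(x_i) + \sum_{j \neq i} g_{ij}(x_i, x_j)\) by the assignment \(f_i \equiv 0\) and \(g_{ij}(x_i, x_j) = x_i^\top A_{ij} x_j\). With this identification the polymatrix utility is literally an instance of the regular template (with the \(f_i\) term vanishing).

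The key step is to verify the second-argument convexity hypothesis of the regular-game definition. For fixed \(x_i\), the map \(x_j \mapsto x_i^\top A_{ij} x_j = (A_{ij}^\top x_i)^\top x_j\) is linear, hence convex. So each \(g_{ij}\) is convex in its second argument, exactly as required. I would also note that the strategy sets of a polymatrix game are probability simplices over each player's finite action set, which are convex, thereby satisfying the remaining ``convex strategy sets'' hypothesis of the definition.

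Finally, since the game is constant-sum by assumption, it is a constant-sum regular game, and Theorem~\ref{thm:regular} applies verbatim to yield convex exploitability, completing the proof.

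The main obstacle here is essentially cosmetic rather than mathematical: the content reduces to the single observation that a bilinear form is affine, hence convex, in each argument separately. The one point I would make explicit is that Theorem~\ref{thm:regular} uses \emph{only} convexity of \(g_{ij}\) in its second argument and never invokes any concavity of \(u_i\) in \(x_i\); this is what allows the purely bilinear (and thus non-strictly-convex) polymatrix payoff to fall within its scope without further hypotheses. Consequently no new estimate or supremum argument is needed beyond what was already carried out for the regular case.
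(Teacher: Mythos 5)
Your proof is correct and follows exactly the same route as the paper's: identify the polymatrix game as a regular game with \(f_i \equiv 0\) and \(g_{ij}(x_i, x_j) = x_i^\top A_{ij} x_j\), note that this is linear and hence convex in its second argument, and invoke Theorem~\ref{thm:regular}. Your additional remark that the simplex strategy sets are convex is a harmless (and slightly more careful) elaboration of the same argument.
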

\begin{proof}
A polymatrix game is a regular game where \(f_i(x_i) = 0\) and \(g_{ij}(x_i, x_j) = x_i^\top A_{ij} x_j\).
The latter is linear, and therefore convex, in its second argument.
Therefore, if the game is constant-sum, by Theorem \ref{thm:regular}, the exploitability is convex.
\end{proof}

\begin{corollary}
A pairwise constant-sum polymatrix game has convex exploitability.
\end{corollary}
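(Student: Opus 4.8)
The plan is to reduce this corollary to the immediately preceding theorem by showing that the pairwise constant-sum property is strictly stronger: every pairwise constant-sum polymatrix game is in fact a (globally) constant-sum polymatrix game, after which the conclusion is immediate.

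First I would pin down the meaning of ``pairwise constant-sum'': for each unordered pair \(\{i, j\}\) of players, the two-player bimatrix game played on that edge is constant-sum, \emph{i.e.}, there is a scalar \(c_{ij}\) with \(A_{ij} + A_{ji}^\top = c_{ij} \mathbf{1} \mathbf{1}^\top\). Because each player's strategy \(x_i\) lies on a simplex, we have \(\mathbf{1}^\top x_i = \mathbf{1}^\top x_j = 1\), so the combined payoff contributed by that edge,
\begin{align}
x_i^\top A_{ij} x_j + x_j^\top A_{ji} x_i = x_i^\top (A_{ij} + A_{ji}^\top) x_j = c_{ij},
\end{align}
is independent of how the players actually play.

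Next I would sum utilities across all players and regroup the ordered pairs into unordered ones:
\begin{align}
\sum_{i \in \mathcal{I}} u_i(x)
&= \sum_{i \in \mathcal{I}} \sum_{j \neq i} x_i^\top A_{ij} x_j \\
&= \sum_{\{i,j\}} \mleft( x_i^\top A_{ij} x_j + x_j^\top A_{ji} x_i \mright) \\
&= \sum_{\{i,j\}} c_{ij}.
\end{align}
The right-hand side is a constant, so the game is constant-sum. Finally I would invoke the preceding theorem (a constant-sum polymatrix game has convex exploitability)---which itself rests on Theorem \ref{thm:regular}---to conclude.

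The main obstacle is purely definitional rather than analytic: one must state the edgewise constant-sum condition correctly and exploit the simplex normalization \(\mathbf{1}^\top x_i = 1\) to see that each edge contributes a fixed constant regardless of play. Once the pairwise-to-global implication is established, everything reduces to bookkeeping, since the heavy lifting (the supremum-of-convex-functions argument) has already been carried out upstream. The only point demanding care is the regrouping of the double sum \(\sum_i \sum_{j \neq i}\) over ordered pairs into a sum over unordered pairs, which is what lets the edgewise cancellations appear.
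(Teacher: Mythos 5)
Your proof is correct and follows exactly the route the paper intends: the corollary is stated without proof immediately after the constant-sum polymatrix theorem, and your argument supplies the one missing step, namely that edgewise constant-sum payoffs (using the simplex normalization \(\mathbf{1}^\top x_i = 1\)) force the global payoff sum to be constant. The regrouping of \(\sum_i \sum_{j \neq i}\) into unordered pairs and the reduction to the preceding theorem are exactly right.
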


\begin{corollary}
A two-player constant-sum matrix game has convex exploitability.
\end{corollary}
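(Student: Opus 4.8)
The plan is to obtain this corollary as an immediate specialization of the constant-sum polymatrix theorem established just above. First I would recognize that a two-player matrix game is nothing but a two-player polymatrix game. Indeed, writing player~1's payoff matrix as $A$ and player~2's as $B$, the utilities are $u_1(x) = x_1^\top A x_2$ and $u_2(x) = x_1^\top B x_2 = x_2^\top B^\top x_1$, which match the polymatrix form $u_i(x) = \sum_{j \neq i} x_i^\top A_{ij} x_j$ upon setting $A_{12} = A$ and $A_{21} = B^\top$. The mixed-strategy sets $\mathcal{S}_i = \triangle[n_i]$ are simplices and hence convex, so the structural hypotheses of the polymatrix (and underlying regular-game) framework are satisfied.

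Next I would reconcile the constant-sum conditions. Because $|\mathcal{I}| = 2$, the interaction graph has a single edge $\{1,2\}$, so the game-level condition $u_1 + u_2 = \text{const}$ coincides exactly with the single-edge pairwise constant-sum condition. Hence a two-player constant-sum matrix game is literally a constant-sum (equivalently, pairwise constant-sum) two-player polymatrix game, and I can invoke the polymatrix theorem (or its pairwise corollary) to conclude that the exploitability $\Phi$ is convex.

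The main obstacle, such as it is, is entirely bookkeeping rather than conceptual. One must verify that the bilinear matrix-game payoff lands in the precise algebraic form $g_{ij}(x_i, x_j) = x_i^\top A_{ij} x_j$ demanded by the polymatrix definition---in particular transposing player~2's matrix so that its payoff reads as linear in $x_1$---and that ``constant-sum over the one pair'' is the same as the whole game being constant-sum. Both checks are immediate, so no genuine difficulty arises; the corollary is essentially a relabeling of the two-player instance of the preceding theorem, which itself follows (via Theorem~\ref{thm:regular}) from the facts that sums and suprema of convex functions are convex.
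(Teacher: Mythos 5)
Your proposal is correct and matches the paper's intent exactly: the corollary is stated as an immediate specialization of the constant-sum polymatrix theorem, obtained by viewing the two-player matrix game as a polymatrix game with the single edge $\{1,2\}$ and payoff matrices $A_{12} = A$, $A_{21} = B^\top$. The bookkeeping you describe (transposing player 2's matrix, identifying game-level and pairwise constant-sum for a single edge) is all that is needed, and the paper offers no separate argument beyond this.
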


\begin{theorem}
A two-player constant-sum concave-convex game has convex exploitability.
\end{theorem}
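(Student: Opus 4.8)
The plan is to reduce the two-player exploitability to a sum of two supremum terms, exploiting the constant-sum cancellation exactly as in the proof of Theorem~\ref{thm:regular}, and then to invoke convexity of a pointwise supremum of convex functions. Throughout I take ``concave--convex'' to mean that \(u_1(x_1,x_2)\) is concave in its own argument \(x_1\) and convex in the opponent's argument \(x_2\) (the usual saddle structure); by the constant-sum relation \(u_1 + u_2 = c\), the analogous property holds for \(u_2\) with the roles of the two arguments swapped.

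First I would write out the exploitability for \(\mathcal{I} = \{1,2\}\),
\begin{align}
\Phi(x) = \mleft( \sup_{y_1} u_1(y_1, x_2) - u_1(x) \mright) + \mleft( \sup_{y_2} u_2(x_1, y_2) - u_2(x) \mright),
\end{align}
and use \(u_1(x) + u_2(x) = c\) to collapse the two subtracted terms into a single additive constant, giving \(\Phi(x) = \sup_{y_1} u_1(y_1, x_2) + \sup_{y_2} u_2(x_1, y_2) - c\). This is the same cancellation device that makes the \(-u_i(x)\) terms vanish in Theorem~\ref{thm:regular}.

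Next I would treat each supremum separately. The first term depends only on \(x_2\); since \(u_1(y_1, \cdot)\) is convex in its second argument for every fixed \(y_1\), and a pointwise supremum of convex functions is convex, \(\sup_{y_1} u_1(y_1, x_2)\) is convex in \(x_2\) and hence convex as a function of the full profile \(x\). For the second term I would rewrite \(u_2(x_1, y_2) = c - u_1(x_1, y_2)\); concavity of \(u_1\) in its first argument makes \(-u_1(\cdot, y_2)\) convex in \(x_1\), so each \(u_2(\cdot, y_2)\) is convex in \(x_1\), and again the supremum over \(y_2\) is convex. Summing the two convex terms with the constant yields a convex \(\Phi\).

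The main subtlety---more a matter of bookkeeping than a deep obstacle---is tracking which argument carries concavity and which carries convexity, since the convexity of each supremum term is driven by the \emph{opponent}-variable convexity of the utility, not by the own-variable concavity that governs best-response existence. I would also note that this result does not follow as a direct corollary of Theorem~\ref{thm:regular}, because a concave--convex utility need not decompose into the additive \(f_i(x_i) + \sum_{j\neq i} g(x_i,x_j)\) form required there; hence the direct argument above is needed.
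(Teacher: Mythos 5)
Your proof is correct and follows essentially the same route as the paper's: cancel the $-u_1(x)-u_2(x)$ terms via the constant-sum identity, then show each remaining supremum term is convex as a pointwise supremum of convex functions (your $\sup_{y_2} u_2(x_1,y_2) = c - \inf_{y_2} u_1(x_1,y_2)$ is just the paper's ``minus a concave infimum'' step written the other way). No gaps.
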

\begin{proof}
In a two-player constant-sum game, the exploitability reduces to the so-called \textit{duality gap} \citep{Grnarova_2021}.
\begin{align}
    \Phi(x) = \sup_{x_1' \in \mathcal{S}_1} u_1(x_1', x_2) - \inf_{x_2' \in \mathcal{S}_2} u_1(x_1, x_2') + C
\end{align}
Here, \(u_1(x_1', x_2)\) is convex in \(x\).
Thus \(\sup_{x_1' \in \mathcal{S}_1} u_1(x_1', x_2)\) is convex in \(x\).
Also, \(u_1(x_1, x_2')\) is concave in \(x\).
Thus \(\inf_{x_2' \in \mathcal{S}_2} u_1(x_1, x_2')\) is concave in \(x\).
Thus \(\Phi(x) = \text{convex} - \text{concave} + \text{constant}\) in \(x\), which is convex in \(x\).
\end{proof}

\subsection{Subgradient descent}

We minimize exploitability by performing subgradient descent.
This raises the question of when this process attains a global minimum.
\citet{kiwiel2004convergence} analyze the convergence of \emph{approximate subgradient methods} for convex optimization, and prove the following theorems.
Let \(\mathcal{S} \subseteq \mathbb{R}^n\) be a nonempty closed convex set,
\(f : \mathcal{S} \to \mathbb{R}\) be a closed proper convex function,
and \(\mathcal{S}_* = \operatorname{argmin} f\).
Let \(x_{t+1} = P_\mathcal{S}(x_t - \nu_t g_t)\) where \(P_\mathcal{S}\) is the projector onto \(\mathcal{S}\) (\(P_\mathcal{S}(x) \in \argmin_{y \in \mathcal{S}} \|x - y\|\)), \(\nu_t \geq 0\) is a stepsize, \(\varepsilon_t \geq 0\) is an error tolerance, and \(g_t \in \partial_{\varepsilon_t} f(x_t)\) is an \(\varepsilon_t\)-\emph{approximate subgradient} of \(f\) at \(x_t\), that is, \(f(x) \geq f(x_t) + \langle g_t, x - x_t \rangle - \varepsilon_t\) for all \(x\).

\begin{theorem}
\citep[Theorem 3.4]{kiwiel2004convergence}
Suppose \(\mathcal{S}_* \neq \varnothing\), \(\sum_{t \in \mathbb{N}} \nu_t = \infty\), and \(\sum_{t \in \mathbb{N}} \nu_t (\tfrac{1}{2} \|g_t\|^2 \nu_t + \varepsilon_t) < \infty\).
Then \(\{x_t\}_{t \in \mathbb{N}}\) converges to some \(x_\infty \in \mathcal{S}_*\).
\end{theorem}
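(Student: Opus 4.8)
The plan is to run the standard quasi-Fej\'er monotonicity argument for the projected approximate subgradient method. First I would fix an arbitrary minimizer \(x_* \in \mathcal{S}_*\) and track the squared distance \(a_t = \|x_t - x_*\|^2\). Since \(x_* \in \mathcal{S}\) is a fixed point of the projection and \(P_\mathcal{S}\) is nonexpansive, I would write
\begin{align}
\|x_{t+1} - x_*\|^2 &= \|P_\mathcal{S}(x_t - \nu_t g_t) - P_\mathcal{S}(x_*)\|^2 \\
&\leq \|x_t - \nu_t g_t - x_*\|^2 \\
&= \|x_t - x_*\|^2 - 2\nu_t \langle g_t, x_t - x_* \rangle + \nu_t^2 \|g_t\|^2.
\end{align}
Next I would invoke the defining inequality of the \(\varepsilon_t\)-subgradient evaluated at \(x = x_*\), namely \(f(x_*) \geq f(x_t) + \langle g_t, x_* - x_t \rangle - \varepsilon_t\), which rearranges to \(\langle g_t, x_t - x_* \rangle \geq f(x_t) - f(x_*) - \varepsilon_t\). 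Substituting and writing \(f_* = f(x_*) = \min f\) yields the key recursion
\begin{align}
a_{t+1} \leq a_t - 2\nu_t (f(x_t) - f_*) + 2\nu_t \mleft( \tfrac{1}{2}\|g_t\|^2 \nu_t + \varepsilon_t \mright),
\end{align}
in which the subtracted term is nonnegative (because \(f(x_t) \geq f_*\)) and the final term is summable by hypothesis.

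The next step is to apply the deterministic Robbins--Siegmund (almost-supermartingale) lemma: a nonnegative sequence satisfying \(a_{t+1} \leq a_t - b_t + c_t\) with \(b_t, c_t \geq 0\) and \(\sum_t c_t < \infty\) has both a convergent \(\{a_t\}\) and \(\sum_t b_t < \infty\). Here \(b_t = 2\nu_t(f(x_t) - f_*)\), so \(\sum_t \nu_t (f(x_t) - f_*) < \infty\). Combined with \(\sum_t \nu_t = \infty\), this forces \(\liminf_t (f(x_t) - f_*) = 0\), so some subsequence satisfies \(f(x_{t_k}) \to f_*\). Convergence of \(\{a_t\}\) shows \(\{x_t\}\) is bounded, so passing to a further subsequence gives \(x_{t_{k_j}} \to \bar{x}\) for some \(\bar{x} \in \mathcal{S}\) (using closedness of \(\mathcal{S}\)). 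Lower semicontinuity of the closed proper convex \(f\) then gives \(f(\bar{x}) \leq \liminf_j f(x_{t_{k_j}}) = f_*\), whence \(\bar{x} \in \mathcal{S}_*\).

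The final, and subtlest, step is to upgrade this subsequential limit to full-sequence convergence. The crucial observation is that the recursion above holds verbatim for \emph{every} choice of reference minimizer, so in particular I may re-run the argument with \(x_* := \bar{x}\), which shows that \(\|x_t - \bar{x}\|^2\) converges to some limit. But along the subsequence \(x_{t_{k_j}} \to \bar{x}\) we have \(\|x_{t_{k_j}} - \bar{x}\| \to 0\), pinning the limit of the whole convergent sequence to \(0\); therefore \(x_t \to \bar{x} =: x_\infty \in \mathcal{S}_*\). I expect the main obstacle to be precisely this last maneuver: subgradient methods are not descent methods, so \(f(x_t)\) need not decrease monotonically and one cannot deduce convergence of the iterates from convergence of the function values alone. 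The device that makes the argument close is instantiating the Fej\'er inequality at the limit point \(\bar{x}\) itself, which is legitimate only because \(\bar{x}\) was shown to lie in \(\mathcal{S}_*\) and the inequality is valid for each minimizer uniformly.
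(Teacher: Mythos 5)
Your proof is correct, and the paper itself offers no proof of this statement (it is quoted verbatim from Kiwiel's Theorem 3.4); your quasi-Fej\'er argument --- nonexpansiveness of \(P_\mathcal{S}\), the \(\varepsilon_t\)-subgradient inequality at a minimizer, the deterministic Robbins--Siegmund lemma, and finally re-instantiating the Fej\'er inequality at the subsequential limit \(\bar{x}\) --- is essentially the same route the cited source takes. No gaps.
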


\begin{theorem}
\citep[Theorem 3.6]{kiwiel2004convergence}
Suppose \(\mathcal{S}_* \neq \varnothing\), \(\sum_{t \in \mathbb{N}} \nu_t = \infty\), \(\sum_{t \in \mathbb{N}} \nu_t^2 < \infty\), \(\sum_{t \in \mathbb{N}} \nu_t \varepsilon_t < \infty\), and the subgradients do not grow too fast: \(\exists c < \infty . \forall t \in \mathbb{N} . \|g_t\|^2 \leq c (1 + \|x_t\|^2)\) (\emph{e.g.}, they are bounded).
Then \(\{x_t\}_{t \in \mathbb{N}}\) converges to some \(x_\infty \in \mathcal{S}_*\).
\end{theorem}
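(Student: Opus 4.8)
The plan is to establish a \emph{quasi-Fej\'er monotonicity} of the iterates with respect to the solution set and then combine it with a deterministic supermartingale convergence argument. Fix any $x^* \in \mathcal{S}_*$ and write $d_t = \|x_t - x^*\|^2$. First I would derive the basic inequality: since $P_\mathcal{S}$ is nonexpansive and $x^* \in \mathcal{S}$, expanding $\|x_{t+1} - x^*\|^2 \leq \|x_t - \nu_t g_t - x^*\|^2$ gives
\begin{align}
    d_{t+1} \leq d_t - 2\nu_t \langle g_t, x_t - x^* \rangle + \nu_t^2 \|g_t\|^2 .
\end{align}
Applying the $\varepsilon_t$-subgradient inequality at $x = x^*$ yields $\langle g_t, x_t - x^* \rangle \geq f(x_t) - f^* - \varepsilon_t$, where $f^* = \min f$, so that
\begin{align}
    d_{t+1} \leq d_t - 2\nu_t (f(x_t) - f^*) + 2\nu_t \varepsilon_t + \nu_t^2 \|g_t\|^2 .
\end{align}

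Next I would absorb the troublesome $\|g_t\|^2$ term using the growth condition. Since $\|x_t\|^2 \leq 2 d_t + 2\|x^*\|^2$, the hypothesis $\|g_t\|^2 \leq c(1 + \|x_t\|^2)$ gives $\|g_t\|^2 \leq a + b\, d_t$ for constants $a, b$ depending only on $c$ and $x^*$. Substituting and using $f(x_t) - f^* \geq 0$ produces a recursion of the form
\begin{align}
    d_{t+1} \leq (1 + b \nu_t^2)\, d_t - 2\nu_t (f(x_t) - f^*) + (a \nu_t^2 + 2\nu_t \varepsilon_t) .
\end{align}
Here the multiplicative perturbation $b \nu_t^2$ is summable by $\sum_t \nu_t^2 < \infty$, and the additive term $a \nu_t^2 + 2\nu_t \varepsilon_t$ is summable by $\sum_t \nu_t^2 < \infty$ together with $\sum_t \nu_t \varepsilon_t < \infty$. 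The deterministic Robbins--Siegmund lemma then yields two conclusions at once: the sequence $d_t = \|x_t - x^*\|^2$ converges (for this fixed $x^*$), and the nonnegative series $\sum_t \nu_t (f(x_t) - f^*)$ is finite.

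From convergence of $d_t$ the iterates are bounded. Combining $\sum_t \nu_t (f(x_t) - f^*) < \infty$ with the divergence hypothesis $\sum_t \nu_t = \infty$ forces $\liminf_t (f(x_t) - f^*) = 0$; otherwise $f(x_t) - f^*$ would be bounded below by a positive constant for all large $t$, making the weighted series diverge. Hence some subsequence $x_{t_k}$ satisfies $f(x_{t_k}) \to f^*$, and by boundedness it admits a further subsequence converging to a limit point $\bar{x}$. Lower semicontinuity of the closed proper convex function $f$ gives $f(\bar{x}) \leq f^*$, so $\bar{x} \in \mathcal{S}_*$.

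Finally I would upgrade subsequential to full convergence via the quasi-Fej\'er property. Because the basic inequality holds for \emph{every} element of $\mathcal{S}_*$, in particular $\|x_t - \bar{x}\|^2$ converges; since it tends to $0$ along the chosen subsequence, its limit must be $0$, whence $x_t \to \bar{x} \in \mathcal{S}_*$. Setting $x_\infty = \bar{x}$ completes the argument. I expect the main obstacle to be the careful handling of the growth condition: because the subgradients are not assumed bounded, $\|g_t\|^2$ must be controlled by $d_t$ itself, which introduces the multiplicative factor $(1 + b \nu_t^2)$ and forces the use of the Robbins--Siegmund lemma in its perturbed form rather than a plain telescoping estimate.
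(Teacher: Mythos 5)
Your proposal is correct, and it follows essentially the same route as the cited source: the paper itself gives no proof of this statement (it is quoted directly from \citet[Theorem 3.6]{kiwiel2004convergence}), and Kiwiel's argument is precisely the quasi-Fej\'er monotonicity estimate you derive---nonexpansiveness of \(P_\mathcal{S}\), the \(\varepsilon_t\)-subgradient inequality at a fixed \(x^* \in \mathcal{S}_*\), absorption of \(\|g_t\|^2\) via the linear-growth condition into a \((1 + b\nu_t^2)\) factor---followed by a Robbins--Siegmund-type lemma, extraction of a minimizing limit point via \(\liminf_t (f(x_t) - f^*) = 0\) and lower semicontinuity, and the final upgrade to full convergence by applying Fej\'er monotonicity at the limit point itself.
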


Similar convergence results are known for subgradient methods on \emph{quasi}convex functions \citep{hu2015inexact}.
Therefore, when the assumptions of the above theorems hold, \(\{x_t\}_{t \in \mathbb{N}}\) converges to a global minimizer of the exploitability function, which is an NE (if an NE exists at all).

\section{Code}

We use Python 3.12.2 with the following libraries:
\begin{itemize}
    \item \texttt{jax} 0.4.28 \citep{jax2018github}: \url{https://github.com/google/jax}
    \item \texttt{flax} 0.8.3 \citep{flax2020github}: \url{https://github.com/google/flax}
    \item \texttt{optax} 0.2.2 \citep{deepmind2020jax}: \url{https://github.com/google-deepmind/optax}
    \item \texttt{matplotlib} 3.8.4 \citep{Hunter:2007}: \url{https://github.com/matplotlib/matplotlib}
\end{itemize}
An implementation of our method is shown below.

\lstinputlisting[
language=python,
breaklines=true,
basicstyle=\ttfamily\scriptsize,
postbreak=\mbox{\textcolor{red}{\(\hookrightarrow\)}\space},
breakindent=0em,
breakatwhitespace=true,
]{code.py}

\end{document}